\pgfplotsset{compat=1.17}
\renewcommand\paragraph{\@startsection{paragraph}{4}{\z@}%
                     {-12\p@ \@plus -4\p@ \@minus -4\p@}%
                     {-0.5em \@plus -0.22em \@minus -0.1em}%
                     {\normalfont\normalsize\bfseries}}
\newcommand{\tmprel}{\ltimes}
\newcommand{\defeq}{\triangleq}%
\newcommand{\defrel}{\stackrel{{\mbox{\tiny\ensuremath{\triangle}}}}{\Longleftrightarrow}} %
\newcommand{\st}{\mid} %
\newcommand{\run}[2][\ensuremath{-\!\!\!\longrightarrow}]{\mathrel{\substack{\raisebox{-3.8pt}[0pt][0pt]{\scalebox{.88}{\ensuremath{#2}}}\\\raisebox{-3.8pt}[0pt][0pt]{\ensuremath{\substack{#1}}}}}}
\tikzstyle{state}=[thick,minimum size=18pt, circle,draw]
\tikzstyle{transition}=[->,thick,>=stealth,shorten >=1pt,shorten <=1pt]
\tikzstyle{final}=[after node path={ node[state, scale=.8] at (\tikzlastnode) {} }]
\tikzstyle{initial}=[after node path={
\tikzset{
	bg/.default={},
	bg/.style={execute at end picture={
			\begin{scope}[on background layer]
				\node[xshift=-1mm, yshift=-1mm] (sw) at (current bounding box.south west) {};
				\node[xshift=1mm, yshift=1mm] (ne) at (current bounding box.north east) {};
				\node[xshift=1mm, yshift=-1mm] (nw) at (current bounding box.north west) {};
				\fill[fill=black!10,rounded corners] (sw) rectangle (ne);

				\ifx&#1&\else
				\node[anchor=north east, xshift=2pt] at (nw) {#1};
				\fi
			\end{scope}
	}},
}
\newcommand{\at}[1]{\scalebox{1.5}{.}{#1}} %
\newcommand{\super}[1]{\wp(#1)}
\newcommand{\lasso}[2]{#1^{}#2^{\omega}}
\newcommand{\tuple}[1]{\langle{#1}\rangle}
\newcommand{\ok}{\ensuremath{\mathtt{ok}}\xspace}
\newcommand{\ko}{\ensuremath{\mathtt{ko}}\xspace}
\newcommand{\textRcat}{\ensuremath{\mathtt{Rcat}}\xspace}
\newcommand{\textRcatA}{\ensuremath{\textRcat_\calA}\xspace}
\newcommand{\textRunI}{\ensuremath{\mathtt{Tgt}}\xspace}
\newcommand{\textRunIB}{\ensuremath{\textRunI_\calB}\xspace}
\newcommand{\textRunF}{\ensuremath{\mathtt{Cxt}}\xspace}
\newcommand{\textRunFB}{\ensuremath{\textRunF_\calB}\xspace}
\newcommand{\calA}{\ensuremath{\mathcal{A}}\xspace}
\newcommand{\calB}{\ensuremath{\mathcal{B}}\xspace}
\newcommand{\calF}{\ensuremath{\mathcal{F}}\xspace}
\newcommand{\base}[3][]{\ensuremath{\mathfrak{B}_{#1}(#2, #3)}\xspace}
\newcommand{\stem}{\ensuremath{\mathtt{Stem}}\xspace}
\newcommand{\per}{\ensuremath{\mathtt{Per}}\xspace}
\newcommand{\UltimA}{\ensuremath{\texttt{Ultim}_\calA}\xspace}
\newcommand{\zero}{\bot}
\newcommand{\one}{\top}
\newcommand{\N}{\mathbb{N}}
\newcommand{\BA}{\textsf{BA}\xspace}
\newcommand{\BAs}{\textsf{BAs}\xspace}
\newcommand{\FORQ}{\textsf{FORQ}\xspace}
\newcommand{\FORQs}{\textsf{FORQs}\xspace}
\newcommand{\FORKLIFT}{\textsc{Forklift}\xspace}
\newcommand{\catA}[2][]{\ensuremath{\textRcatA^{#1}(#2)}}
\newcommand{\up}[2]{\ensuremath{{}_{#1}\!\scalebox{1.2}{\({\upharpoonleft}\)}{#2}}}
\newcommand{\runI}[1]{\ensuremath{\textRunI(#1)}}
\newcommand{\runIB}[1]{\ensuremath{\textRunI_\calB(#1)}}
\newcommand{\runF}[2][]{\ensuremath{\textRunF(#1, #2)}}
\newcommand{\runFB}[2][]{\ensuremath{\textRunF_\calB(#1, #2)}}
\newcommand{\leqI}{%
	\mathrel{%
		\mathchoice%
		{\ensuremath{\raisebox{-3.5pt}[.1pt][.1pt]{\(\stackrel{\mbox{\scalebox{.95}{\(<\)}}}{\mbox{\scalebox{.85}{\(\sim\)}}}\)}}} %
		{\ensuremath{\raisebox{-3.5pt}[.1pt][.1pt]{\(\stackrel{\mbox{\scalebox{.95}{\(<\)}}}{\mbox{\scalebox{.85}{\(\sim\)}}}\)}}} %
		{\ensuremath{\raisebox{-2.5pt}[.1pt][.1pt]{\(\stackrel{\mbox{\scalebox{.75}{\(<\)}}}{\mbox{\scalebox{.65}{\(\sim\)}}}\)}}} %
		{\ensuremath{\raisebox{-2pt}[.1pt][.1pt]{\(\stackrel{\mbox{\scalebox{.55}{\(<\)}}}{\mbox{\scalebox{.45}{\(\sim\)}}}\)}}} %
	}
}
\newcommand{\leqF}{%
	\mathrel{%
		\mathchoice%
		{\ensuremath{\raisebox{-3.5pt}[.1pt][.1pt]{\(\stackrel{\mbox{\scalebox{.85}{\(\ll\)}}}{\mbox{\scalebox{.95}{\(\sim\)}}}\)}}} %
		{\ensuremath{\raisebox{-3.5pt}[.1pt][.1pt]{\(\stackrel{\mbox{\scalebox{.85}{\(\ll\)}}}{\mbox{\scalebox{.95}{\(\sim\)}}}\)}}} %
		{\ensuremath{\raisebox{-2.5pt}[.1pt][.1pt]{\(\stackrel{\mbox{\scalebox{.65}{\(\ll\)}}}{\mbox{\scalebox{.75}{\(\sim\)}}}\)}}} %
		{\ensuremath{\raisebox{-2pt}[.1pt][.1pt]{\(\stackrel{\mbox{\scalebox{.45}{\(\ll\)}}}{\mbox{\scalebox{.55}{\(\sim\)}}}\)}}} %
	}
}
\begin{document}
\title{FORQ-based Language Inclusion Formal Testing\thanks{%
		This work was partially funded by the ESF Investing in your future, the Madrid regional project S2018/TCS-4339 BLOQUES, the Spanish project PGC2018-102210-B-I00 BOSCO, the Ramón y Cajal fellowship RYC-2016-20281, and the ERC grant PR1001ERC02.%
}}

\author{
	Kyveli~Doveri\inst{1,2}\orcidID{0000-0001-9403-2860} \and
	Pierre~Ganty\inst{1}\orcidID{0000-0002-3625-6003} \and
	Nicolas~Mazzocchi\inst{1,3}\orcidID{0000-0001-6425-5369}
}
\institute{
	IMDEA Software Institute, Madrid, Spain\\\email{firsname.lastname@imdea.org} \and Universidad Politécnica de Madrid, Spain \and
	Institute of Science and Technology Austria, Klosterneuburg, Austria
}
	\authorrunning{K. Doveri, P. Ganty, and N. Mazzocchi}

	\maketitle

	\begin{abstract}%
		We propose a novel algorithm to decide the language inclusion between (nondeterministic) B\"uchi automata, a \textsc{PSpace}-complete problem.
    Our approach, like others before, leverage a notion of quasiorder to prune the search for a counterexample by discarding candidates which are subsumed by others for the quasiorder.
    Discarded candidates are guaranteed to not compromise the completeness of the algorithm.
    The novelty of our work lies in the quasiorder used to discard candidates.
    We introduce FORQs (family of right quasiorders) that we obtain by adapting the notion of family of right congruences put forward by Maler and Staiger in 1993.
    We define a FORQ-based inclusion algorithm which we prove correct and instantiate it for a specific FORQ, called the structural FORQ, induced by the B\"uchi automaton to the right of the inclusion sign.
    The resulting implementation, called \textsc{Forklift}, scales up better than the state-of-the-art on a variety of benchmarks including benchmarks from program verification and theorem proving for word combinatorics.
		\keywords{Language inclusion \and  B\"uchi automata \and Well-quasiorders}
    \textbf{Artifact:} \url{https://doi.org/10.5281/zenodo.6552870}
	\end{abstract}

\section{Introduction}\label{sec:introduction}
In verification \cite{DBLP:conf/tacas/HeizmannCDGHLNM18,DBLP:conf/cav/HeizmannHP13} and theorem proving \cite{DBLP:journals/corr/abs-2102-01727}, B\"uchi automata have been used as the underlying formal model.
In these settings, B\"uchi automata respectively encode 1) the behaviors of a system as well as properties about it; and 2) the set of valuations satisfying a predicate.
Questions like asking whether a system complies with a specification naturally reduce to a language inclusion problem and so does proving a theorem of the form \(\forall x\, \exists y,\, P(x) \Rightarrow Q(y)\).

In this paper we propose a new algorithm for the inclusion problem between \(\omega\)-regular languages given by B\"uchi automata.
The problem is \textsc{PSpace}-complete~\cite{kupfermanVerificationFairTransition1996a} and significant effort has been devoted to the discovery of algorithms for inclusion that behave well in practice \cite{DBLP:conf/concur/AbdullaCCHHMV11,DBLP:journals/lmcs/ClementeM19,Doyen2009,fogartyEfficientBuchiUniversality2010a,DBLP:journals/fuin/KuperbergPP21,DBLP:conf/tacas/LiSTCX19}.
Each proposed algorithm is characterized by a set of techniques (e.g. Ramsey-based, rank-based) and heuristics (e.g. antichains, simulation relations).
The algorithm we propose falls into the category of Ramsey-based algorithms and uses the antichain \cite{DBLP:conf/cav/WulfDHR06} heuristics: the search for counterexamples is pruned using quasiorders.
Intuitively when two candidate counterexamples are comparable with respect to some considered quasiorder, the ``higher'' of the two can be discarded without compromising completeness of the search.
In our setting, counterexamples to inclusion are ultimately periodic words, i.e., words of the form \(\lasso{u}{v}\), where \(u\) and \(v\) are called a \emph{stem} and a \emph{period}, respectively.
Therefore pruning is done by comparing stems and periods of candidate counterexamples during the search.

In the work of Abdulla et al.~\cite{DBLP:conf/cav/AbdullaCCHHMV10,DBLP:conf/concur/AbdullaCCHHMV11} which was further refined by Clemente et al.~\cite{DBLP:journals/lmcs/ClementeM19} they use a single quasiorder to compare both stems and periods.
Their effort has been focused on refining that single quasiorder by enhancing it with simulation relations.
Others including some authors of this paper, followed an orthogonal line~\cite{DBLP:conf/concur/DoveriGPR21,DBLP:journals/fuin/KuperbergPP21} that investigates the use of two quasiorders: one for the stems and another one, independent, for the periods.
The flexibility of using different quasiorders yields more pruning when searching for a counterexample.
In this paper, we push the envelope further by using an unbounded number of quasiorders: one for the stems and a family of quasiorders for the periods each of them depending on a distinct stem.
We use the acronym \FORQ, which stands for \emph{family of right quasiorders}, to refer to these quasiorders.
Using \FORQs leads to significant algorithmic differences compared to the two quasiorders approaches.
More precisely, the algorithms with two quasiorders \cite{DBLP:conf/concur/DoveriGPR21,DBLP:journals/fuin/KuperbergPP21} compute exactly two fixpoints (one for the stems and one for the periods) independently whereas the \FORQ-based algorithm that we present computes two fixpoints for the stems and unboundedly many fixpoints for the periods (depending on the number of stems that belong to the first two fixpoints).
Even though we lose the stem/period independence and we compute more fixpoints, in practice, the use of \FORQs scales up better than the approaches based on one or two quasiorders.

We formalize the notion of \FORQ by relaxing and generalizing the notion of \emph{family of right congruences} introduced by Maler and Staiger \cite{malerSyntacticCongruencesLanguages2008a} to advance the theory of recognizability of \(\omega\)-regular languages and, in particular, questions related to minimal-state automata.
Recently, families of right congruences have been used in other contexts like the learning of \(\omega\)-regular languages (see \cite{lmcs:4283} and references therein) and B\"uchi automata complementation \cite{liCongruenceRelationsBuchi2021}. %

Below, we describe how our contributions are organized:
\begin{itemize}
  \item We define the notion of \FORQs and leverage them to identify key finite sets of stems and periods that are sound and complete to decide the inclusion problem (Section~\ref{sec:crux}).
  \item We introduce a \FORQ called the structural \FORQ which relies on the structure of a given B\"uchi automaton (Section~\ref{sec:state}).
  \item We formulate a \FORQ-based inclusion algorithm that computes such key sets as fixpoints, and then use these key stems and periods to search for a counterexample to inclusion via membership queries (Section~\ref{sec:forq}).
  \item We study the algorithmic complexity of the \FORQ-based inclusion algorithm instantiated with structural \FORQs (Section~\ref{sec:complexity}).
  \item We implement the inclusion algorithm with structural \FORQs in a prototype called \FORKLIFT and we conduct an empirical evaluation on a set of 674 benchmarks (Section~\ref{sec:implementation}).
\end{itemize}

\section{Preliminaries}\label{sec:preliminaries}

\paragraph{Languages.}
Let \(\Sigma\) be a finite and non-empty \emph{alphabet}.
We write \(\Sigma^*\) to refer to the set of finite words over \(\Sigma\) and we write \(\varepsilon\) to denote the empty word.
Given \(u \in \Sigma^*\), we denote by \(|u|\) the length of \(u\).
In particular \(|\varepsilon| = 0\).
We also define \(\Sigma^+ \defeq \Sigma^* \setminus \{ \varepsilon\}\), and \(\Sigma^{\nabla n} \defeq \{ u\in\Sigma^* \st |u|\mathrel{\nabla} n\}\) with \(\nabla\in\{\leq,\geq\}\), hence \(\Sigma^* = \Sigma^{\geq 0}, \Sigma^+=\Sigma^{\geq 1}\).
We write \(\Sigma^{\omega}\) to refer to the set of infinite words over \(\Sigma\).
An infinite word \(\mu \in \Sigma^{\omega}\) is said to be \emph{ultimately periodic} if it admits a decomposition \(\mu=\lasso{u}{v}\) with \(u\in \Sigma^*\) (called a \emph{stem}) and \(v\in \Sigma^+\) (called a \emph{period}). We fix an alphabet \(\Sigma\) throughout the paper.

\paragraph{Order Theory.}
Let \(E\) be a set of elements and \(\mathord{\tmprel}\) be a binary relation over \(E\).
The relation \(\mathord{\tmprel}\) is said to be a \emph{quasiorder} when it is \emph{reflexive} and \emph{transitive}.
Given a subset \(X\) of \(E\), we define its \emph{upward closure} with respect to the quasiorder \(\mathord{\tmprel}\) by \(\up{\tmprel}{X} \defeq \{ e \in E \mid \exists x\in X, x\tmprel e\}\).
Given two subsets \(X, Y\subseteq E\) the set \(Y\) is said to be a \emph{basis} for \(X\) with respect to \(\mathord{\tmprel}\), denoted \({\base[\tmprel]{Y}{X}}\), whenever \(Y\subseteq X\) and \(\up{\tmprel}{X}=\up{\tmprel}{Y}\).
The quasiorder  \(\mathord{\tmprel}\) is a \emph{well-quasiorder} if{}f for each set \(X\subseteq E\) there exists a finite set \(Y\subseteq E\) such that \(\base[\tmprel]{Y}{X}\).
This property on bases is also known as the \emph{finite basis property}.
Other equivalent definitions of well-quasiorders can be found in the literature~\cite{DBLP:journals/acta/LucaV94}, we will use the followings:
\begin{enumerate}
  \item For every \(\{e_i\}_{i\in\N}\in E^\N\) there exists \(i,j\in\N\) with \(i<j\) such that \(e_i \tmprel e_j\).
  \item No sequence \(\{X_i\}_{i\in\N}\in \super{E}^\N\) is such that \(\up{\mathord{\tmprel}}{X_1} \subsetneq \up{\mathord{\tmprel}}{X_2} \subsetneq \ldots\) holds.\footnote{The notation \(\wp(E)\) denotes the set of all subsets of \(E\).}
\end{enumerate}

\paragraph{Automata.}
A (nondeterministic) \emph{B\"uchi automaton} \calB (\BA for short) is a tuple $(Q, q_I, \Delta, F)$ where $Q$ is a finite set of states including \(q_I\), the initial state, $\Delta \subseteq Q \times \Sigma \times Q$ is the transition relation, and, $F \subseteq Q$ is the set of accepting states.
We lift $\Delta$ to finite words as expected. %
We prefer to write \(\calB \colon q_1 \run{u} q_2\) instead of $(q_1, u, q_2) \in \Delta$.
In addition, we write \(\calB \colon q_1 \run{u}_F q_2 \) when there exists a state \(q_F\in F\) and two words \(u_1, u_2\) such that \(\calB \colon q_1 \run{u_1} q_F \run{u_2} q_2\), and \(u=u_1u_2\).

A run $\pi$ of \calB over $\mu = a_0 a_1 \cdots  \in \Sigma^\omega$ is a function $\pi \colon \N \rightarrow Q$ such that $\pi(0) = q_I$ and for all position $i \in \N$, we have that $\calB\colon\pi(i) \run{a_i} \pi(i+1)$.
A run is said to be \emph{accepting} if $\pi(i) \in F$
for infinitely many values of \(i\in\N\).
The language  \(L(\calB)\) of words \emph{recognized} by \calB is the set of $\omega$-words for which \calB admits an accepting run.
A language \(L\) is \(\omega\)-\emph{regular} if it is recognized by some \BA.

\section{Foundations of our Approach}\label{sec:crux}

Let \(\calA \defeq (P, p_I, \Delta_{\calA}, F_{\calA})\) be a B\"uchi automaton and \(M\) be an \(\omega\)-regular language.
The main idea behind our approach is to compute a finite subset \(T_\calA\) of ultimately periodic words of \(L(\calA)\) such that:
\begin{equation}\tag{\(\dagger\)}\label{eq:crux:equivalence}
	T_\calA \subseteq M \iff L(\calA) \subseteq M\enspace.
\end{equation}
Then \(L(\calA) \subseteq M\) holds if{}f each of the finitely many words of \(T_\calA\) belongs to \(M\) which is tested via membership queries.

First we observe that such a subset always exists: if the inclusion holds take \(T_\calA\) to be any finite subset of \(L(\calA)\) (empty set included); else take \(T_\calA\) to contain some ultimately periodic word that is a counterexample to inclusion. 
In what follows, we will show that a finite subset \(T_\calA\) satisfying \eqref{eq:crux:equivalence} can be computed by using an ordering to prune the ultimately periodic words of \(L(\calA)\).
We will obtain such an ordering using a family of right quasiorders, a notion introduced below.
\begin{definition}[\FORQ]
A family of right quasiorders is a pair \(\tuple{\mathord{\leqI}, \{\mathord{\leqF_{u}}\}_{u \in \Sigma^{*}}}\) where \(\mathord{\leqI} \subseteq \Sigma^* \times \Sigma^*\) is a right-monotonic\footnote{A quasiorder \(\mathord{\tmprel}\) on \(\Sigma^*\) is \emph{right-monotonic} when \(u \tmprel v\) implies \( u\, w \tmprel v\, w\) for all \(w\in\Sigma^*\).}
  quasiorder as well as every \(\mathord{\leqF_{u}} \subseteq\Sigma^* \times \Sigma^*\) where \(u \in \Sigma^*\).
Additionally, for all \(u, u' \in  \Sigma^*\), we require \(u \leqI u' \Rightarrow \mathord{\leqF_{u'}} \subseteq \mathord{\leqF_u}\) called the \emph{\FORQ constraint}.
\end{definition}
First, we observe that the above definition uses separate orderings for stems and periods.  
The definition goes even further, the ordering used for periods is depending on stems so that a period may or may not be discarded depending on the stem under consideration.
The FORQ constraint tells us that if the periods \(v\) and \(w\) compare for a stem \(u'\), that is \(v\leqF_{u'} w\), then they also compare for every stem \(u\) subsuming \(u'\), that is \(v\leqF_{u} w\) if \(u \leqI u'\).

Expectedly, a FORQ needs to satisfy certain properties for \(T_\calA\) to be finite, computable and for \eqref{eq:crux:equivalence} to hold (in particular the left to right direction).
The property of right-monotonicity of FORQs is needed so that we can iteratively compute \(T_\calA\) via a fixpoint computation (see Section~\ref{sec:forq}).   
\begin{definition}[Suitable \FORQ]
  A \FORQ \(\calF \defeq \tuple{\mathord{\leqI}, \{\mathord{\leqF_{u}}\}_{u \in \Sigma^{*}}}\) is said to be \emph{finite} (resp.\ \emph{decidable}) when \(\mathord{\leqI}\), its converse \(\mathord{\leqI}^{-1}\), and \(\{\mathord{\leqF_u}\}\) for all \(u \in \Sigma^*\) are all well-quasiorders (resp.\ computable).
Given \(L\subseteq\Sigma^\omega\), \(\calF\) is said to \emph{preserve} \(L\) when for all \(u,\hat{u}\in\Sigma^*\) and all \(v, \hat{v} \in \Sigma^+\) if \(\lasso{u}{v}\in L\), \(u\leqI \hat{u}\), \(v\leqF_{\hat{u}} \hat{v}\) and \(\hat{u}\, \hat{v}\leqI \hat{u}\) then \(\lasso{\hat{u}}{\hat{v}}\in L\).
Finally, \(\calF\) is said to be \emph{\(L\)-suitable} (for inclusion) if it is finite, decidable and preserves~\(L\). 
\end{definition}

Intuitively, the ``well'' property on the quasiorders ensures finiteness of \(T_\calA\).
The perservation property ensures completeness: a counterexample to \(L(\calA)\subseteq M\) can only be discarded (that is, not included in \(T_\calA\)) if it is subsumed by another ultimately periodic word in \(T_\calA\) that is also a counterexample to inclusion.

Before defining \(T_\calA\) we introduce for each state \(p\in P\) the sets of words 
\[
  \stem_p \defeq \{ u\in\Sigma^* \mid \calA\colon p_I \run{u} p\}\qquad\text{and}\qquad
  \per_p \defeq \{ v\in\Sigma^+ \mid \calA\colon p \run{v} p\} \enspace .
\]
The set \(\stem_p\) is the set of stems of \(L(\calA)\) that reach state \(p\) in \(\calA\) while the set \(\per_p\) is the set of periods read by a cycle of \(\calA\) on state \(p\).

Given a \(M\)-suitable \FORQ \(\calF \defeq \tuple{\mathord{\leqI}, \{\mathord{\leqF_{u}}\}_{u \in \Sigma^{*}}}\), we let
\begin{equation}\tag{\(\ddagger\)}\label{eq:crux:kernel}
	T_\calA \defeq \big\{ \lasso{u}{v} \st \exists s\in F_\calA\colon u\in U_s,  v\in V_s^w \text{ for some \(w \in W_s\) with \(u\leqI w\)}\big\}
\end{equation}
where for all \(p\in P\), the set \(U_p\) is a basis of \(\stem_p\) with respect to \(\mathord{\leqI}\), that is \(\base[\leqI]{U_p}{\stem_p}\) holds. Moreover \(\base[\leqI^{-1}]{W_p}{\stem_p}\) holds and \(\base[\leqF_w]{V_p^w}{\per_p}\) holds for all \(w\in W_p\).
Note that the quasiorder \(\leqF_{w}\) used to prune the periods of \(\per_p\) depends on a maximal w.r.t. \(\leqI\) stem \(w\) of \(\stem_p\) since \(w\) belongs to the basis \(W_p\) for \(\leqI^{-1}\).
The correctness argument for choosing \(\leqF_{w}\) essentially relies on the FORQ constraint as the proof of \eqref{eq:crux:equivalence} given below shows.
In Section~\ref{sec:conclusion} we will show, that when \(w\) is not ``maximal'' the quasiorder \(\leqF_{w}\) yields a set \(T_\calA\) for which \eqref{eq:crux:equivalence} does not hold.

Furthermore, we conclude from the finite basis property of the quasiorders of \(\calF\) that  \(U_p\), \(W_p\) and \(\{V_p^w\}_{w\in \Sigma^*}\) are finite for all \(p\in P\), hence
\(T_\calA\) is a finite subset of ultimately periodic words of \(L(\calA)\).
Next we prove the equivalence \eqref{eq:crux:equivalence}.
The proof crucially relies on the preservation property of \(\calF\) which allows discarding candidate counterexamples without loosing completeness, that is, if inclusion does not hold a counterexample will be returned.

\begin{proof}[of \eqref{eq:crux:equivalence}]
Consider \(\UltimA \defeq \{\lasso{u}{v} \st  \exists s\in F_{\calA}\colon u\in \stem_s, v\in \per_s, uv \leqI u\}\).
It is easy to show that \(\UltimA = \{\lasso{u}{v} \st  \exists s\in F_{\calA}\colon u\in \stem_s, v\in \per_s\}\) (same definition as \UltimA buth without the constraint \(uv \leqI u\)) by reasoning on properties of well-quasi orders.%
\footnote{%
The case \(\mathord{\subseteq}\) is trivial.
For the case \(\mathord{\supseteq}\), let \(\lasso{u}{v}\) with \(u\in\stem_s\) and \(v\in\per_s\).
If \(uv \leqI u\) then we are done for otherwise consider the sequence \(\{uv^i\}_{i \in \N}\).
Since \(\mathord{\leqI^{-1}}\) is a well-quasiorder, there exists \(x, y \in \N\) such that \(x < y\) and \(uv^x \leqI^{-1}  uv^y\) (viz. \(uv^y \leqI uv^x\)).
Therefore we have  \(\lasso{(uv^x)}{(v^{y-x})} = \lasso{u}{v}\), \((uv^x)\in\stem_s\), \((v^{y-x})\in\per_s\), and  \((uv^x) (v^{y-x}) \leqI (uv^x)\), hence \(\lasso{u}{v}\in\UltimA\).%
} %
It is well-known that \(\omega\)-regular language inclusion holds if and only if it holds for ultimately periodic words.
Formally \(L(\calA) \subseteq M\) holds if and only if \(\UltimA \subseteq M\) holds.
Therefore, to prove \eqref{eq:crux:equivalence}, we show that \(T_\calA \subseteq M \Leftrightarrow \UltimA \subseteq M\).

To prove the implication \(\UltimA \subseteq M \Rightarrow T_\calA \subseteq M\) we start by taking a word \(\lasso{u}{v} \in T_\calA\) such that, by definition~\eqref{eq:crux:kernel}, \(u \in U_s\) and \(v\in V_s^w\) for some \(s\in F_\calA\) and \(w\in W_s\).
We conclude from \(\base[\leqI]{U_s}{\stem_s}\) and \(\base[\leqF_w]{V_s^w}{\per_s}\) that \(u\in U_s \subseteq \stem_s\) and \(v\in V_s^w \subseteq \per_s\).
Thus, we find that \( \lasso{u}{v}\in \UltimA\) hence the assumption \(\UltimA\subseteq M\) shows that \(\lasso{u}{v} \in M\) which proves the implication.

Next, we prove that  \(T_\calA \subseteq M \Rightarrow \UltimA \subseteq M\) holds as well.
Let \(\lasso{u}{v} \in \UltimA\), i.e., such that there exists \(s\in F_\calA\) for which \(u \in \stem_s\) and \(v \in \per_s\), satisfying \(uv \leqI u\).
Since \(u \in \stem_s\) and \(v \in \per_s\), there exist \(u_0\in U_s\), \(w_0 \in W_s\) and \(v_0\in V_s^{w_0}\) such that \(u_0 \leqI u\leqI w_0\) and \(v_0 \leqF_{w_0} v\) thanks to the finite basis property.
By definition we have \(\lasso{u_0}{v_0} \in T_\calA\) and thus we find that \(\lasso{u_0}{v_0} \in M\) since \(T_\calA\subseteq M\).
Next since \(u \leqI w_0\), the \FORQ constraint shows that \(\mathord{\leqF_{w_0}} \subseteq \mathord{\leqF_{u}}\) which, in turn, implies that \(v_0 \leqF_{u} v\) holds.
Finally, we deduce from \(\lasso{u_0}{v_0} \in M\), \(u_0\leqI u\), \(v_0 \leqF_{u} v\), \(uv\leqI u\) and the preservation of \(M\) by the \FORQ \(\calF\) that \(\lasso{u}{v}\in M\).
We thus obtain that \(\UltimA \subseteq M\) and we are done.\qed
\end{proof}

\begin{example}\label{ex:running}
To gain more insights about our approach consider the \BAs of \figurename~\ref{fig:example:automata} for which we want to check whether \(L(\calA)\subseteq L(\calB)\) holds.
From the description of \(\calA\) it is routine to check that \(\stem_{p_I} = \Sigma^* \) and \(\per_{p_I} = \Sigma^+\).
Let us assume the existence\footnote{The definition of the orderings, needed to compute the bases, are given in Example~\ref{example:struct}.} of \(\mathord{\leqI}\) (hence \(\mathord{\leqI}^{-1}\)), \(\mathord{\leqF_\varepsilon}\) and \(\mathord{\leqF_{aa}}\) such that \(a\leqI aa\) holds and so does \(\base[\leqI]{\{\varepsilon,a\}}{\Sigma^*}\), \(\base[\leqI^{-1}]{\{\varepsilon,aa\}}{\Sigma^*}\),  \(\base[\leqF_\varepsilon]{\{b\}}{\Sigma^+}\) and \(\base[\leqF_{aa}]{\{a\}}{\Sigma^+}\).
In addition, we set \(U_{p_I}=\{\varepsilon,a\}\) since \(\base[\leqI]{\{\varepsilon,a\}}{\Sigma^*}\) and \(W_{p_I}=\{\varepsilon,aa\}\) since \(\base[\leqI^{-1}]{\{\varepsilon,aa\}}{\Sigma^*}\).
Moreover \(V_{p_I}^\varepsilon=\{b\}\) since \(\base[\leqF_\varepsilon]{\{b\}}{\Sigma^+}\), and \(V_{p_I}^{aa}=\{a\}\) since \(\base[\leqF_{aa}]{\{a\}}{\Sigma^+}\).
Next by definition~\eqref{eq:crux:kernel} of \(T_\calA\) and from \(a\leqI aa\) we deduce that \(T_\calA = \{ \lasso{\varepsilon}{(b)}, \lasso{a}{(a)} \}\).
Finally, we conclude from \eqref{eq:crux:equivalence} and \( a^\omega\in T_\calA\) that \(a^\omega\in L(\calA)\) (since \(T_\calA \subseteq L(\calA)\)) hence that \(L(\calA) \nsubseteq L(\calB)\) because \(a^\omega \notin L(\calB)\).
By checking membership of the two  ultimately periodic words of \(T_\calA\) into \(L(\calB)\) we thus have shown that \(L(\calA) \subseteq L(\calB)\) does not hold.
\end{example}

In the example above we did not detail how the \FORQ was obtained let alone how to compute the finite bases.
We fill that gap in the next two sections: we define \FORQs based on the underlying structure of a given \BA in Section~\ref{sec:state} and show they are suitable; and we give an effective computation of the bases hence our \FORQ-based inclusion algorithm in Section~\ref{sec:forq}.

\section{Defining \FORQs from the Structure of an Automaton}\label{sec:state}

In this section we introduce a type of \FORQs called structural \FORQs such that
given a \BA \(\calB\) the structural \FORQ induced by \(\calB\) is \(L(\calB)\)-suitable.

\begin{definition}\label{def:struct}
	Let \(\calB \defeq (Q, q_I, \Delta_\calB, F_\calB)\) be a \BA.
	The structural \FORQ of \calB is the pair \(\tuple{\mathord{\leqI^{\calB}}, \{\mathord{\leqF_u^{\calB}}\}_{u \in \Sigma^{*}}}\) where the quasiorders are defined by:
	\begin{align*}
			u_1 \leqI^{\calB} u_2 &\defrel \runIB{u_1} \subseteq \runIB{u_2}
		\\
			v_1 \leqF^{\calB}_{u} v_2 &\defrel \runFB[{\runIB{u}}]{v_1} \subseteq \runFB[{\runIB{u}}]{v_2}
	\end{align*}
	with \(\textRunIB \colon \super{Q} \times \Sigma^* \rightarrow \super{Q}\) and \(\textRunFB \colon \super{Q} \times \Sigma^* \rightarrow \super{Q^2 \times \{\zero, \one\}}\)
	\begin{align*}
			\runIB{u} &\defeq \{q' \in Q \st \calB \colon q_I \run{u} q'\}
		\\
			\runFB[X]{v} &\defeq \{ (q, q', k) \st q \in X, \calB \colon q \run{v} q', (k = \one \Rightarrow \calB \colon q \run{v}_F q')\}
	\end{align*}
\end{definition}

Given \(u \in \Sigma^*\), the set \(\runIB{u}\) contains states that \(u\) can ``target'' from the initial state \(q_I\).
A ``context'' \((q, q', k)\) returned by \textRunFB, consists in a source state \(q \in Q\), a sink state \(q' \in Q\) and a boolean \(k \in \{\top, \bot\}\) that keeps track whether an accepting state is visited.
Note that, having \(\zero\) as last component of a context does \emph{not} mean that no accepting state is visited.
When it is clear from the context, we often omit the subscript \(\calB\) from \textRunIB and \textRunFB. 
Analogously, we omit the \BA from the structural \FORQ quasiorders when there is no ambiguity.
\begin{lemma}
  Given a \BA \(\calB\), the pair \(\tuple{\mathord{\leqI^{\calB}}, \{\mathord{\leqF_u^{\calB}}\}_{u \in \Sigma^{*}}}\) of Definition~\ref{def:struct} is a \FORQ.
\end{lemma}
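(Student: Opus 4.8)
The plan is to verify the three defining conditions of a \FORQ directly from Definition~\ref{def:struct}: that \(\mathord{\leqI^{\calB}}\) is a right-monotonic quasiorder, that each \(\mathord{\leqF_u^{\calB}}\) is a quasiorder, and that the \FORQ constraint \(u \leqI^{\calB} u' \Rightarrow \mathord{\leqF_{u'}^{\calB}} \subseteq \mathord{\leqF_u^{\calB}}\) holds. All three reduce to elementary manipulations of the set operators \(\textRunIB\) and \(\textRunFB\), since each quasiorder is defined as the preimage of set inclusion under one of these operators, and inclusion is itself a (right-monotonic, where applicable) quasiorder.

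First I would observe that reflexivity and transitivity of both \(\mathord{\leqI^{\calB}}\) and \(\mathord{\leqF_u^{\calB}}\) are immediate: they are pullbacks of the inclusion relation \(\mathord{\subseteq}\) along the maps \(u \mapsto \runIB{u}\) and \(v \mapsto \runFB[{\runIB{u}}]{v}\) respectively, and the pullback of a quasiorder along any function is a quasiorder. For right-monotonicity of \(\mathord{\leqI^{\calB}}\), the key lemma is the monotonicity of targets under concatenation, namely \(\runIB{u\,w} = \{q'' \st \exists q',\ q' \in \runIB{u},\ \calB\colon q' \run{w} q''\}\), which depends only on \(\runIB{u}\) monotonically: if \(\runIB{u_1} \subseteq \runIB{u_2}\) then \(\runIB{u_1 w} \subseteq \runIB{u_2 w}\), whence \(u_1 w \leqI^{\calB} u_2 w\). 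This is the decomposition of runs along \(u\,w\) into a run over \(u\) followed by a run over \(w\), and it is the only place where the structure of \(\Delta_\calB\) lifted to words is used.

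For the \FORQ constraint, suppose \(u \leqI^{\calB} u'\), i.e. \(\runIB{u} \subseteq \runIB{u'}\). I need \(v_1 \leqF_{u'}^{\calB} v_2 \Rightarrow v_1 \leqF_{u}^{\calB} v_2\), that is, \(\runFB[{\runIB{u'}}]{v_1} \subseteq \runFB[{\runIB{u'}}]{v_2}\) implies \(\runFB[{\runIB{u}}]{v_1} \subseteq \runFB[{\runIB{u}}]{v_2}\). The point is that \(\textRunFB\) is monotone in its first (set) argument: from the definition, \(\runFB[X]{v} = \{(q,q',k) \in \runFB[Q]{v} \st q \in X\}\), so \(X \subseteq X'\) gives \(\runFB[X]{v} \subseteq \runFB[X']{v}\), and more precisely \(\runFB[X]{v} = \runFB[X']{v} \cap (X \times Q \times \{\zero,\one\})\). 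Then a context \((q,q',k) \in \runFB[{\runIB{u}}]{v_1}\) has \(q \in \runIB{u} \subseteq \runIB{u'}\), so it lies in \(\runFB[{\runIB{u'}}]{v_1} \subseteq \runFB[{\runIB{u'}}]{v_2}\); since its source \(q\) is still in \(\runIB{u}\), it belongs to \(\runFB[{\runIB{u}}]{v_2}\), as required.

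I do not expect any serious obstacle here; the statement is essentially a bookkeeping lemma that isolates the algebraic facts about \(\textRunIB\) and \(\textRunFB\) used later. The only point deserving care is making the ``restriction to source states in \(X\)'' observation precise enough that the \FORQ-constraint argument goes through — one must note that a context's membership in \(\runFB[X]{v}\) is determined by its source state lying in \(X\) together with the underlying \(X\)-independent data \((q,q',k) \in \runFB[Q]{v}\) — but this is immediate from unfolding the definition. I would state this restriction identity as a one-line sub-observation and then chain the inclusions as above.
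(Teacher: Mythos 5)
Your argument for the \FORQ constraint and for the right-monotonicity of \(\mathord{\leqI^{\calB}}\) is essentially the paper's own proof: your restriction identity \(\runFB[X]{v} = \{(q,q',k)\in\runFB[Q]{v} \st q\in X\}\) is exactly what underlies the paper's key observation that \(\runF[X]{v}\subseteq\runF[X]{v'}\) implies \(\runF[Y]{v}\subseteq\runF[Y]{v'}\) for \(Y\subseteq X\), and your monotone-dependence argument for \(\runIB{uw}\) on \(\runIB{u}\) is the paper's right-monotonicity step.

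There is, however, one genuine omission, caused by how you read the definition. The definition of a \FORQ requires \emph{every} \(\mathord{\leqF_u}\) to be a right-monotonic quasiorder, not merely a quasiorder ("\(\mathord{\leqI}\) is a right-monotonic quasiorder as well as every \(\mathord{\leqF_u}\)"), and the paper's proof explicitly discharges this ("the reasoning with the other quasiorders and \textRunF proceeds analogously"). This is not a cosmetic requirement: right-monotonicity of \(\mathord{\leqF_w}\) is what Lemma~\ref{lem:rightmono} needs when it is applied to the repeat/until loop of line~\ref{algo:forq:line:loopV} computing the period bases, so your proposal as stated does not establish everything the later development relies on. The missing piece is easy to add in the same style as your \(\mathord{\leqI^{\calB}}\) argument: for a fixed \(X\subseteq Q\), the set \(\runFB[X]{va}\) is determined monotonically from \(\runFB[X]{v}\) (a context over \(va\) with source in \(X\) is a context over \(v\) with the same source, extended by an \(a\)-transition, the acceptance bit being set if it was already set or the appended transition visits an accepting state — this is exactly the inductive formula the paper states in Section~\ref{sec:implementation}); hence \(\runFB[X]{v_1}\subseteq\runFB[X]{v_2}\) implies \(\runFB[X]{v_1a}\subseteq\runFB[X]{v_2a}\), i.e.\ \(v_1\leqF^{\calB}_u v_2\) implies \(v_1a\leqF^{\calB}_u v_2a\). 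With that sub-observation added, your proof is complete and coincides with the paper's.
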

\begin{proof}
  Let \(\calB \defeq (Q, q_I, \Delta_\calB, F_\calB)\) be a \BA, we start by proving that the \FORQ constraint holds: \( u \leqI^{\calB} u' \implies \mathord{\leqF^{\calB}_{u'}} \subseteq \mathord{\leqF^{\calB}_{u}}\).
	First, we observe that, for all \(Y \subseteq X \subseteq Q\) and all \(v, v' \in \Sigma^*\), we have that \(\runF[{X}]{v} \subseteq \runF[{X}]{v'} \Rightarrow \runF[Y]{v} \subseteq \runF[Y]{v'}\).
	Consider \(u, u' \in \Sigma^*\) such that \(u \leqI^\calB u'\) and \(v, v' \in \Sigma^*\) such that \(v \leqF^\calB_{u'} v'\).
	Let \(X \defeq \runI{u}\) and \(X' \defeq \runI{u'}\), we have that \(X\subseteq X'\) following \(u \leqI^\calB u'\). 
  Next, we conclude from \(v \leqF^\calB_{u'} v'\) that \(\runF[X']{v} \subseteq \runF[X']{v'}\), hence that \(\runF[X]{v} \subseteq \runF[X]{v'}\) by the above reasoning using \(X\subseteq X'\), and finally that \(v \leqF^\calB_u v'\).

  For the right monotonicity, Definition~\ref{def:struct} shows that
  if \(\runI{u} \subseteq \runI{v}\) then \(\runI{ua} \subseteq \runI{va}\), hence we have \(u\leqI v\) implies \(ua\leqI va\) for all \(a\in\Sigma\). 
  The reasoning with the other quasiorders and \textRunF proceeds analogously.
\qed\end{proof}

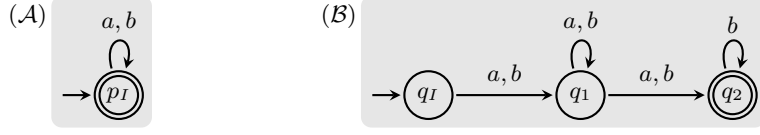
\begin{figure}[t]
	\centering
	\begin{minipage}{.29\linewidth}\centering
		\begin{tikzpicture}[bg={\((\calA)\)},thick]
			\tikzstyle{every state}=[scale=0.75,fill=blue!5,draw=blue!80]
			\node[label=center:\(p_I\),state, initial, final] (i) {};
			\path[transition]
			(i) edge[loop above, above] node{\(a, b\)} (i)
			;
		\end{tikzpicture}
	\end{minipage}
	\begin{minipage}{.69\linewidth}\centering
		\begin{tikzpicture}[bg={\((\calB)\)}, node distance=2cm,thick]
			\tikzstyle{every state}=[scale=0.75,fill=blue!5,draw=blue!80]
			\node[label=center:\(q_I\), state, initial] (i) {};
			\node[label=center:\(q_1\), right of=i, state] (q) {};
			\node[label=center:\(q_2\), right of=q, state, final] (p) {};
			\path[transition]
			(i) edge[above] node{\(a,b\)} (q)
			(q) edge[above] node{\(a,b\)} (p)
			(q) edge[loop above] node{\(a,b\)} (q)
			(p) edge[loop above] node{\(b\)} (p)
			;
		\end{tikzpicture}
	\end{minipage}
	\caption{B\"uchi automata \(\calA\) and \(\calB\) over the alphabet \(\Sigma= \{ a,b \}\).}
	\label{fig:example:automata}
\end{figure}

\begin{example}\label{example:struct}
	Consider the \BA \calB of \figurename~\ref{fig:example:automata} and let \(\tuple{\mathord{\leqI}, \{\mathord{\leqF_u}\}_{u \in \Sigma^{*}}}\) be its structural \FORQ.
  More precisely, we have \(\runI{\varepsilon} = \{q_I\}\); \(\runI{a} = \runI{b}  = \{q_1\}\); and \(\runI{u} = \{q_1, q_2\}\) for all \(u \in \Sigma^{\geq 2}\).
  In particular we conclude from \(u_1 \leqI u_2 \defrel \runI{u_1} \subseteq \runI{u_2}\) that \(a \leqI aa\), \(a\leqI b\) and \(b\leqI a\); \(\varepsilon\) and \(a\) are incomparable; and so are \(\varepsilon\) and \(aa\).
	Since \textRunI has only three distinct outputs, the set \(\{\mathord{\leqF_u}\}_{u \in \Sigma^*}\) contains three distinct quasiorders.
	\begin{enumerate}
    \item \(v_1 \leqF_\varepsilon v_2 \defrel  \runF[\{q_I\}]{v_1} \subseteq \runF[{\{q_I\}}]{v_2} \) where
		\begin{itemize}
      \item \(\runF[\{q_I\}]{\varepsilon} = \{(q_I, q_I, \zero)\}\)
			\item \(\runF[\{q_I\}]{a} = \runF[\{q_I\}]{b} = \{(q_I, q_1, \zero)\}\)
			\item \(\runF[\{q_I\}]{v} = \{(q_I, q_1, \zero), (q_I, q_2, \zero), (q_I, q_2, \one)\}\) for all \(v \in\Sigma^{\geq 2}\) 
		\end{itemize}
			\item \(v_1 \leqF_a v_2 \iff v_1 \leqF_b v_2 \defrel \runF[\{q_1\}]{v_1} \subseteq \runF[{\{q_1\}}]{v_2} \) where
		\begin{itemize}
      \item \(\runF[\{q_1\}]{\varepsilon} = \{ (q_1,q_1,\zero) \}\)
      \item \(\runF[\{q_1\}]{v} =  \{(q_1, q_1, \zero), (q_1, q_2, \zero), (q_1, q_2, \one)\}\) for all \(v\in\Sigma^+\)
		\end{itemize}
		\item \(v_1 \leqF_{u_1} v_2 \iff v_1 \leqF_{u_2} v_2  \defrel \runF[\{q_1,q_2\}]{v_1} \subseteq \runF[{\{q_1,q_2\}}]{v_2} \) for all 
\(u_1,u_2\in\Sigma^{{}\geq 2}\) where
		\begin{itemize}
      \item \(\runF[\{q_1, q_2\}]{\varepsilon} = \{(q_1,q_1,\zero),(q_2,q_2,\zero),(q_2,q_2,\one)\}\)
			\item \(\runF[\{q_1, q_2\}]{v} = \{(q_1, q_1, \zero), (q_1, q_2, \zero), (q_1, q_2, \one)\}\) for all \(v \in \Sigma^+\backslash\{b\}^{+}\)
      \item \(\runF[\{q_1, q_2\}]{v} = \{(q_1, q_1, \zero), (q_1, q_2, \zero), (q_1, q_2, \one), (q_2,q_2,\zero),(q_2,q_2,\one)\}\) for all \(v \in \{b\}^{+}\). 
		\end{itemize}
	\end{enumerate}
  With the above definitions the reader is invited to check the following predicates \(\base[\leqI]{\{\varepsilon,a\}}{\Sigma^*}\), \(\base[\leqI]{\{\varepsilon,b\}}{\Sigma^*}\), \(\base[\leqI^{-1}]{\{\varepsilon,aa\}}{\Sigma^*}\), \(\base[\leqF_\varepsilon]{\{b\}}{\Sigma^+}\), \(\base[\leqF_{a}]{\{b\}}{\Sigma^+}\) and \(\base[\leqF_{aa}]{\{a\}}{\Sigma^+}\). 
  Also observe that none of the above finite bases contains comparable words for the ordering thereof.
  We also encourage the reader to revisit Example~\ref{ex:running}.
\end{example}

As prescribed in Section~\ref{sec:crux}, we show that for every \BA \(\calB\) its structural \FORQ is \(L(\calB)\)-suitable, namely it is finite, decidable and preserves \(L(\calB)\). 

\begin{proposition}\label{prop:struct:preservation}
	Given a \BA \calB, its structural \FORQ is \(L(\calB)\)-suitable.
\end{proposition}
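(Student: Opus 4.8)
The plan is to verify the three ingredients of \(L(\calB)\)-suitability in turn: finiteness and decidability, which are routine, and preservation of \(L(\calB)\), which is the substantive part and rests on an iteration lemma fed into a Ramsey-style pigeonhole extraction.

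\textbf{Finiteness and decidability.}
By Definition~\ref{def:struct}, each relation among \(\mathord{\leqI^{\calB}}\), its converse, and (for each fixed \(u\)) \(\mathord{\leqF^{\calB}_{u}}\) is obtained by pulling back, along a map with finite codomain, an ordering on that codomain: \(u \mapsto \runIB{u}\) into \((\super{Q},\subseteq)\) and into \((\super{Q},\supseteq)\), and \(v \mapsto \runFB[{\runIB{u}}]{v}\) into \((\super{Q^2\times\{\zero,\one\}},\subseteq)\). Any reflexive transitive relation on a finite set is a well-quasiorder (an infinite sequence over a finite set repeats a value, which is below itself by reflexivity), and the pullback of a well-quasiorder along an arbitrary map is again a well-quasiorder (extract indices \(i<j\) with comparable images and use the sequence characterisation of well-quasiorders); this yields finiteness. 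Decidability holds because \(\runIB{u}\) and \(\runFB[X]{v}\) are computable by a subset-construction-style traversal of \(\calB\) — the \(\one\)-component being obtained by additionally tracking, via a product with a two-state gadget, whether an accepting state has been traversed — after which inclusion of the resulting explicit finite sets is decided by inspection.

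\textbf{Preservation of \(L(\calB)\).}
I fix \(u,\hat u\in\Sigma^*\) and \(v,\hat v\in\Sigma^+\) with \(\lasso u v\in L(\calB)\), \(u\leqI^{\calB}\hat u\), \(v\leqF^{\calB}_{\hat u}\hat v\) and \(\hat u\hat v\leqI^{\calB}\hat u\), and set \(X\defeq\runIB{\hat u}\). Unfolding Definition~\ref{def:struct}, these hypotheses say precisely: (i) \(\runIB{u}\subseteq X\); (ii) \(X\) is closed under reading \(\hat v\), i.e.\ \(q\in X\) and \(\calB\colon q\run{\hat v}q'\) force \(q'\in X\) (this is exactly \(\runIB{\hat u\hat v}\subseteq X\)); and (iii) \(\runFB[X]{v}\subseteq\runFB[X]{\hat v}\). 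The heart of the proof is an iteration lemma, proved by induction on \(n\geq 1\): for every \(q\in X\) and every \(q'\), if \(\calB\colon q\run{v^n}q'\) then \(\calB\colon q\run{\hat v^n}q'\) and \(q'\in X\), and moreover if \(\calB\colon q\run{v^n}_F q'\) then \(\calB\colon q\run{\hat v^n}_F q'\). For \(n=1\) one reads \((q,q',\zero)\in\runFB[X]{v}\subseteq\runFB[X]{\hat v}\) off (iii) to get \(\calB\colon q\run{\hat v}q'\), then invokes (ii) for \(q'\in X\), and lifts a \(\zero\)-witness to a \(\one\)-witness in the same way when an accepting state is traversed; the step splits \(v^{n+1}=v^n\,v\) and combines the induction hypothesis with the \(n=1\) case, with a routine case split on whether the accepting state of \(\calB\colon q\run{v^{n+1}}_F q'\) lies in the first \(n\) blocks or in the last one (admissible because \(\calB\colon q_1\run{w}_F q_2\) permits its accepting state at the endpoints of the run). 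To finish, I would take an accepting run of \(\calB\) on \(\lasso u v\), and by pigeonhole find a state \(q\) recurring among the states visited at the block boundaries \(|u|,|u|+|v|,|u|+2|v|,\dots\); its first occurrence yields \(\calB\colon q_I\run{u}q_0\run{v^{n_0}}q\) with \(q_0\in\runIB{u}\subseteq X\), whence \(q\in X\), and, the run being accepting, some interval between two consecutive occurrences of \(q\) carries an accepting state, so \(\calB\colon q\run{v^{n_1}}_F q\) for some \(n_1\geq 1\). The iteration lemma converts this loop into \(\calB\colon q\run{\hat v^{n_1}}_F q\); together with \(\calB\colon q_I\run{\hat u}q\) (valid since \(q\in X=\runIB{\hat u}\)) and with \(\hat v\neq\varepsilon\), this gives an accepting run of \(\calB\) reading \(\hat u\,(\hat v^{n_1})^\omega=\lasso{\hat u}{\hat v}\), hence \(\lasso{\hat u}{\hat v}\in L(\calB)\).

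\textbf{Main obstacle.}
The only genuine difficulty is the iteration lemma. The relation \(\leqF^{\calB}_{\hat u}\) compares a single copy of \(v\) against a single copy of \(\hat v\), and only along runs starting inside \(X=\runIB{\hat u}\); amplifying it to arbitrary iterates is exactly what the invariance hypothesis \(\hat u\hat v\leqI^{\calB}\hat u\) (clause (ii)) is for, and the part needing care is propagating the \(\zero/\one\) acceptance flag correctly through the induction and reconciling it with the endpoint-inclusive definition of \(\calB\colon q_1\run{w}_F q_2\). The Ramsey/pigeonhole extraction of the loop is entirely standard.
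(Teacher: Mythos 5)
Your proof is correct, and it assembles the accepting run for \(\lasso{\hat{u}}{\hat{v}}\) by a genuinely different route than the paper. The paper transplants the hypothesis run block by block: writing the accepting run over \(\lasso{u}{v}\) as \(q_I \run{u} q_0 \run{v} q_1 \run{v} q_2 \cdots\), it shows by induction on \(i\) that every context \((q_i,q_{i+1},x_i)\) read on \(v\) is also a context read on \(\hat{v}\), using \(\hat{u}\hat{v}^{\,i}\leqI\hat{u}\) (obtained from \(\hat{u}\hat{v}\leqI\hat{u}\) by right-monotonicity) to keep every checkpoint \(q_i\) inside \(\runIB{\hat{u}}\) so that \(v\leqF_{\hat{u}}\hat{v}\) applies; the witnessing run for \(\lasso{\hat{u}}{\hat{v}}\) then traverses the same infinite sequence of checkpoints with the same acceptance bits, so no repeated state ever has to be located. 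You instead prove an iteration lemma comparing whole powers \(v^n\) against \(\hat{v}^n\) (your closure clause on \(X=\runIB{\hat{u}}\) plays exactly the role the paper assigns to \(\hat{u}\hat{v}^{\,i}\leqI\hat{u}\)), then extract by pigeonhole an accepting lasso \(q_I \run{u} q_0 \run{v^{n_0}} q \run{v^{n_1}}_F q\) from the hypothesis run and transfer only that finite lasso. The engine is identical in both arguments — context transfer through \(\leqF_{\hat{u}}\) combined with closure of \(\runIB{\hat{u}}\) under reading \(\hat{v}\) — but your variant pays for the pigeonhole step and the acceptance-flag bookkeeping in the power lemma (which you handle correctly, including the endpoint-inclusive case split), while gaining modularity and avoiding any explicit appeal to right-monotonicity of \(\leqI\). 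Your finiteness argument (pullbacks of inclusion orders on finite powersets are well-quasiorders) and decidability argument match what the paper dismisses as trivial.
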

\begin{proof}
	Let  \(\calB \defeq (Q, q_I, \Delta_\calB, F_\calB)\) be a \BA and \(\calF \defeq \tuple{\leqI, \{\leqF_{u}\}_{u\in \Sigma^*}}\) be its structural \FORQ.
  The finiteness proof of \(\calF\) is trivial since \(Q\) is finite and so is the proof of decidability by Definition~\ref{def:struct}. 
  For the preservation, given \( \lasso{u_0}{v_0} \in L(\calB)\), we show that for all \(u \in \Sigma^*\) and all \(v \in \Sigma^+\) such that \(uv \leqI u\) and \(u_0 \leqI u \) and \(v_0 \leqF_{u} v\) then \(\lasso{u}{v} \in L(\calB)\) holds.
	Let a run \(\pi_0 \defeq q_I \run{u_0} q_0 \run{v_0} q_1 \run{v_0} q_2 \dots \) of \calB over \(\lasso{u_0}{v_0}\) which is accepting.
  Stated equivalently, we have \(q_0\in\runI{u_0}\) and 
  \( (q_i,q_{i+1},x_i) \in \runF[\runI{u_0v_0^i}]{v_0} \) for every \(i\in\N\) with the additional constraint that \(x_i = \top\) holds infinitely often. 
	
	We will show that \calB has an accepting run over \(\lasso{u}{v}\) by showing that \(q_0\in\runI{u}\) holds; \( (q_i,q_{i+1},x_i) \in \runF[\runI{uv^i}]{v} \) holds for every \(i\in\N\); and \(x_i = \top\) holds infinitely often.
  Since \(u_0 \leqI u\) and \(q_0 \in \runI{u_0}\) we find that \(q_0 \in \runI{u}\) by definition of \(\mathord{\leqI}\).
	Next we show the remaining constraints by induction.
	The induction hypothesis states that for all \(0\leq n\) we have \((q_{n},q_{n+1},x_n)\in\runF[{\runI{uv^{n}}}]{v}\).
	For the base case (\(n=0\)) we have to show that
	\( (q_0,q_1,x_0) \in \runF[{\runI{u}}]{v}\).
	We conclude from \( (q_0,q_1,x_0) \in \runF[{\runI{u}}]{v_0}\), \( v_0 \leqF_{u} v\) and the definition of \(\leqF_{u}\) that \(\runF[{\runI{u}}]{v_0}\subseteq \runF[{\runI{u}}]{v}\) and finally that \( (q_0,q_1,x_0) \in  \runF[{\runI{u}}]{v}\).
	For the inductive case, assume \((q_{n},q_{n+1},x_n)\in\runF[{\runI{uv^{n}}}]{v}\).
  The definition of context shows that \(q_{n+1} \in \runI{uv^{n+1}}\).
	It takes an easy an induction to show that \(uv^n \leqI u\) for all \(n\) using \(uv\leqI u\)  and right-monotonicity of \(\mathord{\leqI}\).
  We conclude from \(uv^{n+1} \leqI u\), the definition of \(\mathord{\leqI}\) and \(q_{n+1} \in \runI{uv^{n+1}}\) that \(q_{n+1}\in\runI{u}\) also holds, hence that
  \( (q_{n+1}, q_{n+2}, x_{n+1}) \in \runF[\runI{u}]{v_0}\) following the definition of contexts and that of \(\pi_0\).
  Next, we find that \( (q_{n+1},q_{n+2},x_{n+1})\in \runF[{\runI{u}}]{v}\) following a reasoning analogous to the base case, this time starting with \( (q_{n+1}, q_{n+2}, x_{n+1}) \in \runF[\runI{u}]{v_0})\).
  Finally, \(q_{n+1} \in \runI{uv^{n+1}}\) implies that \((q_{n+1},q_{n+2},x_{n+1})\in \runF[{\runI{uv^{n+1}}}]{v}\).
  We have thus shown that \(q_0\in\runI{u}\) and \( (q_i,q_{i+1},x_i) \in \runF[\runI{uv^i}]{v} \) for every \(i\in\N\) with the additional constraint that \(x_i = \top\) holds infinitely often and we are done.\qed
\end{proof}

\section{A \FORQ-based Inclusion Algorithm}\label{sec:forq}

As announced at the end of Section~\ref{sec:crux} it remains, in order to formulate our \FORQ-based algorithm deciding whether \(L(\calA) \subseteq M\) holds, to give an effective computation for the bases defining \(T_\calA\).
We start with a fixpoint characterization of the stems and periods of \BAs using
the function \(\textRcatA \colon \super{\Sigma^*}^{|P|} \rightarrow \super{\Sigma^*}^{|P|}\):
\begin{equation*}
  \textRcatA(\vec{X})\at{p} \defeq \vec{X}\at{p} \cup \big\{ wa \in \Sigma^* \st w \in \vec{X}\at{p'}, a \in \Sigma, \calA \colon p' \run{a} p \big\}
\end{equation*}
where \(\vec{S}\at{p}\) denotes the \(p\)-th element of the vector \(\vec{S} \in \super{\Sigma^*}^{|P|}\).
In \figurename~\ref{algo:forq}, the repeat/until loops at lines~\ref{algo:forq:line:loopW} and \ref{algo:forq:line:loopU} compute iteratively subsets of the stems of \calA, while the loop at line~\ref{algo:forq:line:loopV} computes iteratively subsets of the periods of \calA.
The following lemma formalizes the above intuition. 

\begin{lemma}\label{lemma:rcat}
Consider \(\vec{U}_0\) and \(\vec{V}_1^s\) (with \(s\in F_\calA\)) in the \FORQ-based algorithm.
The following holds for all \(n \in \N\):
	\[
		\catA[n]{\vec{U}_0}\at{p} = \stem_p \cap \Sigma^{\leq n}\text{ for all \(p\in P\), and }
		\catA[n]{\vec{V}_1^s}\at{s} = \per_s \cap \Sigma^{\leq n+1}\enspace .
	\]
\end{lemma}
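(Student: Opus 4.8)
The plan is to prove the two equalities separately, each by a straightforward induction on \(n\), the engine in both cases being the decomposition of runs: a run of \(\calA\) reading a word \(wa\) (with \(a \in \Sigma\)) from one state to another factors, through the state reached after reading the prefix \(w\), into a run reading \(w\) followed by a single \(a\)-labelled transition. The base cases will be nothing more than a restatement of how the \FORQ-based algorithm initializes \(\vec{U}_0\) and \(\vec{V}_1^s\).

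I would first do the stems. For \(n = 0\) we have \(\catA[0]{\vec{U}_0}\at{p} = \vec{U}_0\at{p}\), and since \(\calA \colon p_I \run{\varepsilon} p\) holds exactly when \(p = p_I\), the initialization of \(\vec{U}_0\) (the vector equal to \(\{\varepsilon\}\) at \(p_I\) and \(\emptyset\) elsewhere) is precisely \(\stem_p \cap \Sigma^{\leq 0}\) componentwise. For the step, unfold one application of \(\textRcatA\):
\[
  \catA[n+1]{\vec{U}_0}\at{p} = \catA[n]{\vec{U}_0}\at{p} \cup \big\{ wa \st w \in \catA[n]{\vec{U}_0}\at{p'},\ a \in \Sigma,\ \calA \colon p' \run{a} p \big\}\enspace,
\]
substitute the induction hypothesis \(\catA[n]{\vec{U}_0}\at{q} = \stem_q \cap \Sigma^{\leq n}\) for every \(q \in P\), and check the two inclusions. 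The inclusion \(\subseteq\) is immediate: \(w \in \stem_{p'}\) together with \(\calA \colon p' \run{a} p\) gives \(wa \in \stem_p\), and appending one letter keeps the length \(\leq n+1\). For \(\supseteq\), a word of \(\stem_p\) of length \(\leq n+1\) is either already of length \(\leq n\) — hence in \(\catA[n]{\vec{U}_0}\at{p}\) by the induction hypothesis — or of the form \(wa\) with \(|w| \leq n\), and run-decomposition supplies the intermediate state \(p'\) with \(w \in \stem_{p'}\) and \(\calA \colon p' \run{a} p\).

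For the periods, the statement as written (only the \(s\)-component) is not strong enough to carry the induction, so I would strengthen it to: for all \(p \in P\) and all \(n \in \N\), \(\catA[n]{\vec{V}_1^s}\at{p} = \{ v \in \Sigma^+ \st |v| \leq n+1,\ \calA \colon s \run{v} p \}\). The base case \(n = 0\) is again the initialization, whose \(p\)-component is \(\{ a \in \Sigma \st \calA \colon s \run{a} p \}\), i.e. the length-one words taking \(s\) to \(p\); the inductive step is verbatim the argument above with runs rooted at \(s\) instead of \(p_I\) and the length window shifted by one — a single application of \(\textRcatA\) turns words of length between \(1\) and \(n+1\) into words of length between \(2\) and \(n+2\), while the \(\catA[n]{\vec{V}_1^s}\at{p}\) term retains lengths \(1\) to \(n+1\). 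Specializing \(p = s\) and using \(\per_s = \{ v \in \Sigma^+ \st \calA \colon s \run{v} s \}\) gives \(\catA[n]{\vec{V}_1^s}\at{s} = \per_s \cap \Sigma^{\leq n+1}\).

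The only points needing genuine attention are the length bookkeeping — in particular the \(+1\) shift in the period case, which originates from seeding \(\vec{V}_1^s\) with length-one words rather than with \(\varepsilon\) — and the observation that the period claim must be established simultaneously for every state \(p\), not just for \(s\). Everything else is mechanical, helped by the fact that \(\textRcatA(\vec{X})\at{p} \supseteq \vec{X}\at{p}\), so the contribution of words of length at most \(n\) is inherited for free at every step.
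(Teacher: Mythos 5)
Your proof is correct and follows essentially the same route as the paper's: an induction on \(n\) that unfolds one application of \(\textRcatA\) and uses the decomposition of a run at its last letter, with the period claim strengthened to hold at every state \(p\) so the induction goes through. The paper packages this strengthening slightly differently — it proves a single general identity \(\catA[n]{\vec{X}}\at{p} = \{ w_1w_2 \st w_1 \in \vec{X}\at{p'},\ \calA \colon p' \run{w_2} p,\ |w_2| \leq n\}\) for an arbitrary seed vector \(\vec{X}\) and then instantiates it with \(\vec{U}_0\) and \(\vec{V}_1^s\) — but the inductive content is the same as yours.
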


\begin{figure}[t]\centering
	\begin{tikzpicture}[bg]
		\node[text width=\linewidth-1.5em]{\hspace{-0.6em}
			\begin{minipage}{\linewidth+2.4em}
				\begin{algorithm}[H]
					\DontPrintSemicolon
					\SetInd{.2em}{.6em}
					\SetKwIF{If}{ElseIf}{Else}{if}{then}{else if}{else}{}
					\SetKwFor{For}{for each}{do}{}
					\SetKwIF{DefIf}{DefElseIf}{DefElse}{}{with}{and}{and}{}
					\SetKwFor{Def}{let}{as}{}
					\SetKwFor{Select}{select}{such that}{}
					\SetKwFor{While}{while}{do}{}
					\SetKwRepeat{Repeat}{repeat}{until}
					\SetKw{Return}{return}
					\SetKw{And}{and}
					\SetKw{St}{such that}
					\SetKwProg{FUNCTON}{Function}{:}{}
					\SetKwInput{IN}{Input}
					\SetKwInput{OUT}{Output}
					\SetKwInput{DATA}{Data}
					\SetKwComment{ccc}{\color{gray}/* }{\color{gray}\ */\quad}

					\IN{B\"uchi automaton \(\calA \defeq (P, p_I, \Delta_\calA, F_\calA)\)}
					\IN{\(\omega\)-regular language \(M\) with  procedure deciding \(\lasso{u}{v}\in M\) given \(u,v\)}
					\IN{\(M\)-suitable \FORQ \(\calF \defeq \tuple{\mathord{\leqI}, \{\mathord{\leqF_u}\}_{u\in \Sigma^* }}\)}
					\OUT{Returns \ok if \(L(\calA)\subseteq M\) and \ko otherwise}
					\BlankLine
					
					\SetAlgoNoLine
					\FUNCTON{}{
						\SetAlgoVlined
						\lDef*
						{\(\vec{U}_0 \in \super{\Sigma^*}^{|P|}\)}
						{\leDefIf{\(\vec{U}_0\at{p} \defeq \varnothing\)}{\(p \neq p_I\)}{\(\vec{U}_0\at{p_I} \defeq \{\varepsilon\}\)}}\nllabel{algo:forq:line:U0}
						\(\vec{W} \coloneqq \vec{U} \coloneqq \vec{U}_0\)\nllabel{algo:forq:line:Winit}\;
						
						\lRepeat
						{\(\base[\mathord{\leqI^{-1}}]{\vec{W}\at{p}}{\catA{\vec{W}}\at{p}}\) for all \(p\in P\)}%
            {\(\vec{W} \coloneqq \catA{\vec{W}}\)} \nllabel{algo:forq:line:loopW}
						\lRepeat
						{\(\base[\mathord{\leqI}]{\vec{U}\at{p}}{\catA{\vec{U}}\at{p}}\) for all \(p\in P\)}%
            {\(\vec{U} \coloneqq \catA{\vec{U}}\)} \nllabel{algo:forq:line:loopU}
						
						\For{\(s\in F_\calA\)}{ \nllabel{algo:forq:line:final}
							\lDef*
							{\(\vec{V}_1^s \in \super{\Sigma^*}^{|P|}\)}
							{\lDefIf{\(\vec{V}_1^s\at{p} \defeq \{ a \in \Sigma \st \calA \colon s \run{a} p\}\)}{\(p \in P\)}}\nllabel{algo:forq:line:V1}
							
							\For{\(w \in \vec{W}\at{s}\)}{ \nllabel{algo:forq:line:W}
								\(\vec{V}^s \coloneqq \vec{V}_1^s\)\nllabel{algo:forq:line:Vsinit}\;
								\lRepeat
								{\(\base[\mathord{\leqF_w}]{\vec{V}^s\at{p}}{\catA{\vec{V}^s}\at{p}}\) for all \(p\in P\)}
								{\(\vec{V}^s \coloneqq \catA{\vec{V}^s}\)} \nllabel{algo:forq:line:loopV}
								
								\For{\(v\in \vec{V}^s\at{s}\)}{ \nllabel{algo:forq:line:V}
									\For{\(u\in \vec{U}\at{s}\) \St \(u \leqI w \)}{\nllabel{algo:forq:line:U}
										\lIf{\(\lasso{u}{v}\notin M\)}{\Return \ko}}\nllabel{algo:forq:line:ko}
								}
							}
						}\Return \ok}
          \end{algorithm}
        \end{minipage}};
	\end{tikzpicture}
	\caption{\FORQ-based algorithm}\label{algo:forq}
\end{figure}

Prior to proving the correctness of the algorithm of \figurename~\ref{algo:forq} we need the following result which is key for establishing the correctness of the repeat/until loop conditions of lines~\ref{algo:forq:line:loopW}, \ref{algo:forq:line:loopU}, and~\ref{algo:forq:line:loopV}.
\begin{lemma}\label{lem:rightmono}
	 Let \(\mathord{\tmprel}\) be a right-monotonic quasiorder over \(\Sigma^*\).
	 Given \linebreak\(\calA \defeq (P, p_I, \Delta_\calA, F_\calA)\) and \(\vec{S}, \vec{S'} \in \super{\Sigma^*}^{|P|}\), if \(\base[\mathord{\tmprel}]{\vec{S'}\at{p}}{\vec{S}\at{p}}\) holds for all \(p \in P\) then \(\base[\mathord{\tmprel}]{\catA{\vec{S'}}\at{p}}{\catA{\vec{S}}\at{p}}\) holds for all \(p \in P\).
\end{lemma}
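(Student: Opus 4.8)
The plan is to unfold the two ingredients of the basis predicate---inclusion of the carrier sets and equality of the upward closures---and check each separately. So first I would fix a right-monotonic quasiorder $\mathord{\tmprel}$, the automaton $\calA$, vectors $\vec{S},\vec{S'}$ with $\base[\mathord{\tmprel}]{\vec{S'}\at{p}}{\vec{S}\at{p}}$ for all $p$, and an arbitrary state $p\in P$. Recall that $\base[\mathord{\tmprel}]{\vec{S'}\at{p}}{\vec{S}\at{p}}$ means $\vec{S'}\at{p}\subseteq\vec{S}\at{p}$ together with $\up{\tmprel}{\vec{S}\at{p}}=\up{\tmprel}{\vec{S'}\at{p}}$. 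The inclusion $\catA{\vec{S'}}\at{p}\subseteq\catA{\vec{S}}\at{p}$ is immediate from the definition of \textRcatA: every element is either already in $\vec{S'}\at{p}\subseteq\vec{S}\at{p}$, or of the form $wa$ with $w\in\vec{S'}\at{p'}\subseteq\vec{S}\at{p'}$ and $\calA\colon p'\run{a}p$, and in both cases it lies in $\catA{\vec{S}}\at{p}$.

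The substantive part is the equality of upward closures. Since $\catA{\vec{S'}}\at{p}\subseteq\catA{\vec{S}}\at{p}$, we automatically get $\up{\tmprel}{\catA{\vec{S'}}\at{p}}\subseteq\up{\tmprel}{\catA{\vec{S}}\at{p}}$, so only the reverse inclusion needs work. Take any $z\in\catA{\vec{S}}\at{p}$; I would show $z\in\up{\tmprel}{\catA{\vec{S'}}\at{p}}$ by a case split following the definition of \textRcatA. If $z\in\vec{S}\at{p}$, then $z\in\up{\tmprel}{\vec{S}\at{p}}=\up{\tmprel}{\vec{S'}\at{p}}$, so there is $x\in\vec{S'}\at{p}\subseteq\catA{\vec{S'}}\at{p}$ with $x\tmprel z$, and we are done. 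If instead $z=wa$ with $w\in\vec{S}\at{p'}$, $a\in\Sigma$, and $\calA\colon p'\run{a}p$, then $w\in\up{\tmprel}{\vec{S}\at{p'}}=\up{\tmprel}{\vec{S'}\at{p'}}$, so there is $w'\in\vec{S'}\at{p'}$ with $w'\tmprel w$. Here is where right-monotonicity is used: $w'\tmprel w$ implies $w'a\tmprel wa=z$. Moreover $w'\in\vec{S'}\at{p'}$ and $\calA\colon p'\run{a}p$ give $w'a\in\catA{\vec{S'}}\at{p}$, so $z\in\up{\tmprel}{\catA{\vec{S'}}\at{p}}$ as required. This establishes $\up{\tmprel}{\catA{\vec{S}}\at{p}}\subseteq\up{\tmprel}{\catA{\vec{S'}}\at{p}}$, hence equality, hence $\base[\mathord{\tmprel}]{\catA{\vec{S'}}\at{p}}{\catA{\vec{S}}\at{p}}$ for the chosen $p$; since $p$ was arbitrary we are done.

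The only place where anything could go wrong is the second case of the closure argument, and it is exactly the point where the right-monotonicity hypothesis is indispensable: without it, knowing that a shorter witness $w'$ subsumes $w$ gives no control over the extended words $w'a$ and $wa$. Everything else is a routine unfolding of definitions, so I expect no real obstacle; the lemma is essentially the statement that the one-step operator \textRcatA is compatible with right-monotonic bases, and the proof is a direct verification.
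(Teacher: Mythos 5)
Your proposal is correct and follows essentially the same route as the paper's proof: a case split on whether an element of \(\catA{\vec{S}}\at{p}\) is old or of the form \(wa\), using the basis at the source state and right-monotonicity to lift the subsuming witness. The only difference is that you additionally spell out the easy containment \(\catA{\vec{S'}}\at{p}\subseteq\catA{\vec{S}}\at{p}\) and the trivial direction of the closure equality, which the paper leaves implicit.
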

\begin{proof}
	Consider \(w\in\catA{\vec{S}}\at{p}\) where \(p\in P\), we show that there exists \(w'\in\catA{\vec{S}'}\at{p}\) such that \(w' \tmprel w\).
	Assume that \(\base[\mathord{\tmprel}]{\vec{S}'\at{p}}{\vec{S}\at{p}}\) holds for all \(p\in P\).
	In particular, for all \(w_1 \in \vec{S}\at{p}\), there exists \(w'_1 \in \vec{S}'\at{p}\) such that \(w'_1 \tmprel w_1\).
	In the case where \(w_1 \in \catA{\vec{S}}\at{p} \setminus \vec{S}\at{p}\), by definition of \textRcatA \(w_1\) is of the form \(w_2a\) for some \(a\in\Sigma\) and some \(w_2\in \vec{S}\at{\hat{p}}\) such that \(\calA \colon \hat{p} \run{a} p\).
	Since \(\base[\mathord{\tmprel}]{\vec{S}'\at{\hat{p}}}{\vec{S}\at{\hat{p}}}\) and \(w_2\in \vec{S}\at{\hat{p}}\), there exists \(w_3\in \vec{S}'\at{\hat{p}}\) such that \(w_3 \tmprel w_2\).
	We deduce that \(w_3 a \tmprel w_2a\) holds, hence \(w_3 a \tmprel w_1\) holds as well from the right-monotonicity of \(\mathord{\tmprel}\).
	Furthermore \(w_3a \in\catA{\vec{S}'}\at{p}\) by definition of \textRcatA and since \(\calA \colon \hat{p} \run{a} p\). 
	Finally, we conclude that \(\base[\mathord{\tmprel}]{\catA{\vec{S'}}}{\catA{\vec{S}}}\) holds.
	\qed
\end{proof}

\begin{theorem}
  The \FORQ-based algorithm decides the inclusion of \BAs.
\end{theorem}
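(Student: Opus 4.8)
The plan is to derive the theorem from the equivalence \eqref{eq:crux:equivalence} of Section~\ref{sec:crux}. I would split the argument into three parts: (a) the algorithm halts; (b) when the \texttt{repeat}/\texttt{until} loops exit, the finite sets $\vec{U}\at{p}$, $\vec{W}\at{p}$ and, for each $s\in F_\calA$ and $w\in\vec{W}\at{s}$, $\vec{V}^s\at{s}$ are respectively a basis of $\stem_p$ for $\leqI$, a basis of $\stem_p$ for $\leqI^{-1}$, and a basis of $\per_s$ for $\leqF_w$; (c) granting (b), the set of ultimately periodic words $\lasso{u}{v}$ submitted to the membership test at lines~\ref{algo:forq:line:V}--\ref{algo:forq:line:ko} coincides with a set $T_\calA$ of the form \eqref{eq:crux:kernel} (taking $U_p\defeq\vec{U}\at{p}$, $W_p\defeq\vec{W}\at{p}$ and $V_s^w\defeq\vec{V}^s\at{s}$), so that the algorithm returns \ko iff some word of $T_\calA$ lies outside $M$, i.e.\ iff $T_\calA\not\subseteq M$, i.e.\ by \eqref{eq:crux:equivalence} iff $L(\calA)\not\subseteq M$; it returns \ok otherwise. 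This is the statement.

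For (a): each loop repeatedly performs $\vec{S}\coloneqq\catA{\vec{S}}$, and since $\vec{S}\at{p}\subseteq\catA{\vec{S}}\at{p}$ always holds, the iterates grow pointwise, hence so do the upward closures $\up{\tmprel}{\vec{S}\at{p}}$ for the relevant quasiorder $\tmprel\in\{\leqI,\leqI^{-1},\leqF_w\}$. As $\calF$ is $M$-suitable it is finite, so each of $\leqI$, $\leqI^{-1}$ and $\leqF_w$ is a well-quasiorder; by the ascending-chain characterization of well-quasiorders recalled in Section~\ref{sec:preliminaries}, no such chain strictly increases forever, hence each of the $|P|$ chains is eventually constant, and once they all are the exit guard $\base[\tmprel]{\vec{S}\at{p}}{\catA{\vec{S}}\at{p}}$ holds for every $p\in P$. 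Every guard is decidable because $\calF$ is decidable and the sets involved are finite; the test $\lasso{u}{v}\notin M$ at line~\ref{algo:forq:line:ko} is decidable by hypothesis; and the \texttt{for each} loops range over the finite sets $F_\calA$, $\vec{W}\at{s}$, $\vec{V}^s\at{s}$, $\vec{U}\at{s}$. Hence the computation is finite.

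For (b) I would argue the case of $\vec{U}$ in detail; $\vec{W}$ and $\vec{V}^s$ are handled identically after swapping the quasiorder and the starting vector (note that $\leqI^{-1}$ and $\leqF_w$ are right-monotonic, the former because $\leqI$ is, the latter by definition of a \FORQ, so Lemma~\ref{lem:rightmono} applies in all three cases). After $n$ iterations of the loop at line~\ref{algo:forq:line:loopU} we have $\vec{U}=\catA[n]{\vec{U}_0}$, and Lemma~\ref{lemma:rcat} gives $\catA[n]{\vec{U}_0}\at{p}=\stem_p\cap\Sigma^{\leq n}$. The loop exits at the least such $n$ with $\base[\leqI]{\catA[n]{\vec{U}_0}\at{p}}{\catA[n+1]{\vec{U}_0}\at{p}}$ for all $p$; iterating Lemma~\ref{lem:rightmono}, $\catA[m]{\vec{U}_0}\at{p}$ is a $\leqI$-basis of $\catA[m+1]{\vec{U}_0}\at{p}$ for every $m\geq n$ and every $p$, hence $\catA[n]{\vec{U}_0}\at{p}$ is a $\leqI$-basis of $\catA[m]{\vec{U}_0}\at{p}$ for all $m\geq n$. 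Since upward closure commutes with unions and $\bigcup_{m\in\N}(\stem_p\cap\Sigma^{\leq m})=\stem_p$, we obtain $\up{\leqI}{\stem_p}=\bigcup_{m\in\N}\up{\leqI}{\catA[m]{\vec{U}_0}\at{p}}=\up{\leqI}{\catA[n]{\vec{U}_0}\at{p}}=\up{\leqI}{\vec{U}\at{p}}$, and as $\vec{U}\at{p}\subseteq\stem_p$ this yields $\base[\leqI]{\vec{U}\at{p}}{\stem_p}$. The analogous computations give $\base[\leqI^{-1}]{\vec{W}\at{p}}{\stem_p}$ (loop at line~\ref{algo:forq:line:loopW}) and, using $\catA[n]{\vec{V}_1^s}\at{s}=\per_s\cap\Sigma^{\leq n+1}$ from Lemma~\ref{lemma:rcat}, $\base[\leqF_w]{\vec{V}^s\at{s}}{\per_s}$ (loop at line~\ref{algo:forq:line:loopV}).

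I expect the crux to be exactly this propagation argument inside (b): showing that reaching a fixpoint \emph{modulo the quasiorder} at some finite stage $n$ already produces a basis of the entire infinite set $\stem_p$ (resp.\ $\per_s$), not merely of the next iterate $\catA[n+1]{\cdot}$. It is the one place where three ingredients must be combined carefully: Lemma~\ref{lem:rightmono} (the basis relation is preserved by one application of $\textRcatA$, which requires right-monotonicity of all quasiorders of the \FORQ), Lemma~\ref{lemma:rcat} (the iterates of $\textRcatA$ are precisely the length-bounded truncations of $\stem_p$ and $\per_s$, so their union recovers the whole set), and the well-quasiorder hypothesis from finiteness of $\calF$ (to know that a fixpoint modulo the quasiorder is reached at all). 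Everything else — decidability of the guards and of membership in $M$, finiteness of the enumerations, and the identification of the words tested at lines~\ref{algo:forq:line:V}--\ref{algo:forq:line:ko} with $T_\calA$ — is routine bookkeeping.
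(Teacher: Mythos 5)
Your proposal is correct and follows essentially the same route as the paper: termination via the well-quasiorder/decidability hypotheses, then an iterated application of Lemma~\ref{lem:rightmono} combined with Lemma~\ref{lemma:rcat} to show the stabilized vectors are bases of $\stem_p$ and $\per_s$, and finally the identification of the queried words with $T_\calA$ so that \eqref{eq:crux:equivalence} yields correctness. You even make explicit two points the paper leaves implicit (right-monotonicity of $\leqI^{-1}$ and of $\leqF_w$ needed to invoke Lemma~\ref{lem:rightmono}, and the union-of-truncations step passing from a basis of the iterates to a basis of the full set), so nothing is missing.
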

\begin{proof}
  We first show that every loop of the algorithm eventually terminates.
  First, we conclude from the definition of \textRcatA and the initializations (lines~\ref{algo:forq:line:Winit} and \ref{algo:forq:line:Vsinit}) of each repeat/until loop (lines~\ref{algo:forq:line:loopW}, \ref{algo:forq:line:loopU}, and~\ref{algo:forq:line:loopV}) that each component of each vector holds a finite set of words.
  Observe that the halting conditions of the repeat/until loops are effectively computable since every quasiorder of \(\calF\) is decidable and because, in order to decide \(\base[\mathord{\tmprel}]{Y}{X}\) where \(X,Y\) are finite sets and \(\mathord{\tmprel}\) is decidable, it suffices to check that \(Y\subseteq X\) and that for every \(x\in X\) there exists \(y\in Y\) such that \(y \tmprel x\).
  Next, we conclude from the fact that all the quasiorders of \(\calF\) used in the repeat/until loops are all well-quasiorders that there is no infinite sequence \(\{X_i\}_{i\in\N}\) such that \(\up{\mathord{\tmprel}}{X_1} \subsetneq \up{\mathord{\tmprel}}{X_2} \subsetneq \ldots\)
  Since \(\base[\mathord{\tmprel}]{Y}{X}\) is equivalent to \(Y\subseteq X\land \up{\mathord{\tmprel}}{X}\subseteq\up{\mathord{\tmprel}}{Y}\) and since each time \textRcatA updates a component its upward closure after the update includes the one before, we find that every repeat/until loop must terminate after finitely many iterations.

  Next, we show that when the repeat/until loop of line~\ref{algo:forq:line:loopU} halts, \(\base[\mathord{\leqI}]{\vec{U}\at{p}}{\stem_p}\) holds for all \(p\in P\).
  It takes an easy induction on \(n\) together with Lemma~\ref{lem:rightmono} to show that if \(\base[\mathord{\leqI}]{\catA[n+1]{\vec{U_0}}\at{p}}{\catA[n]{\vec{U_0}}\at{p}}\) holds for all \(p\in P\) then \(\base[\leqI]{\catA[n]{\vec{U_0}}\at{p}}{\catA[m]{\vec{U_0}}\at{p}}\) holds for all \(m>n\).
  Hence Lemma~\ref{lemma:rcat} shows that \(\base[\mathord{\leqI}]{\catA[k]{\vec{U_0}}\at{p}}{\stem_p}\) holds for all \(p\in P\) where \(k\) is the number of iterations of the repeat/until loop implying \(\base[\mathord{\leqI}]{\vec{U}\at{p}}{\stem_p}\) holds when the loop of line~\ref{algo:forq:line:loopU} halts.

  An analogue reasoning shows that \(\base[\mathord{\leqI^{-1}}]{\vec{W}\at{p}}{\stem_p}\) holds for all \(p\in P\), as well as \(\base[\mathord{\leqF_{w}}]{\vec{V}^s\at{s}}{\per_s}\) holds for all \(w\in \vec{W}\at{s}\) and all \(s\in F_\calA\) upon termination of the loops of lines~\ref{algo:forq:line:loopW} and~\ref{algo:forq:line:loopV}.

  To conclude, we observe that each time a membership query is performed at line~\ref{algo:forq:line:ko}, the ultimately periodic word \(\lasso{u}{v}\) belongs to \(T_\calA\) defined by (\ref{eq:crux:kernel}).
  This is ensured since \(u\in\base[\mathord{\leqI}]{\vec{U}\at{s}}{\stem_s}\), \(w\in\base[\mathord{\leqI^{-1}}]{\vec{W}\at{s}}{\stem_s}\), \(v\in \base[\mathord{\leqF_{w}}]{\vec{V}^s\at{s}}{\per_s}\) for some \(s \in F_\calA\) and, thanks to the test at line~\ref{algo:forq:line:U}, the comparison \(u\leqI w\) holds.
\qed
\end{proof}

\begin{remark}\label{rmk:finitebasis}
	The correctness of the \FORQ-based algorithm still holds when, after every ``\(\coloneqq\)'' assignment (at lines~\ref{algo:forq:line:Winit}, \ref{algo:forq:line:loopW}, \ref{algo:forq:line:loopU}, \ref{algo:forq:line:Vsinit} and~\ref{algo:forq:line:loopV}), we remove from the variable content zero or more subsumed words for the corresponding ordering.
	The effect of removing zero or more subsumed words from a variable can be achieved by replacing assignments like, for instance, \(\vec{U} \coloneqq \catA{\vec{U}}\) at line~\ref{algo:forq:line:loopU} with \(\vec{U} \coloneqq \catA{\vec{U}}; \vec{U}\coloneqq \vec{U}_r\) where \(\vec{U}_r\) satisfies \(\base[\mathord{\leqI}]{\vec{U}_r\at{p}}{\vec{U}\at{p}}\) for all \(p\in P\).
  The correctness of the previous modification follows from Lemma~\ref{lem:rightmono}.
  Therefore, the sets obtained by discarding subsumed words during computations still satisfy the basis predicates of \(T_\calA\) given at~\eqref{eq:crux:kernel}.
\end{remark}

It is worth pointing out that the correctness arguments developed above, do not depend on the specifics of the structural \FORQs.
The \FORQ-based algorithm is sound as long as we provide a suitable \FORQ.
Next we study the algorithmic complexity of the algorithm of \figurename~\ref{algo:forq}.

\section{Complexity of the structural \FORQ-based algorithm}\label{sec:complexity} %
In this Section, we establish an upper bound on the runtime of the algorithm of \figurename~\ref{algo:forq} when the input \FORQ is the structural \FORQ induced by a \BA \(\calB\).
Let \(n_\calA\) and \(n_\calB\) be respectively the number of states in the \BA \(\calA\) and \(\calB\).
We start by bounding the number of iterations in the repeat/until loops.
In each repeat/until loop, each component of the vector holds a finite set of words the upward closure of which  grows (for \(\subseteq\)) over time and when all the upward closures stabilize the loop terminates.
In the worst case, an iteration of the repeat/until loop adds exactly one word to some component of the vector which keeps the halting condition falsified (the upward closure strictly increases).
Therefore a component of the vector cannot be updated more than \(2^{n_\calB}\) times for otherwise its upward closure has stabilized.
We thus find that the total number of iterations is bounded from above by \(n_\calA \cdot 2^{n_\calB}\) for the loops computing \(\vec{U}\) and \(\vec{W}\).
Using an analogous reasoning we conclude that each component of the \(\vec{V}\) vector has no more than \(2^{(2 {n_\calB}^2)}\) elements and  the total number of iterations is upper-bounded by \( n_\calA \cdot 2^{(2 {n_\calB}^2)}\).
To infer an upper bound on the runtime of each repeat/until loop we also need to multiply the above expressions by a factor \(|\Sigma|\) since the number of concatenations in \textRcat depends on the size of the alphabet.

Next, we derive an upper bound on the number of membership queries performed at line~\ref{algo:forq:line:ko}.
The number of iterations of the loops of lines \ref{algo:forq:line:final}, \ref{algo:forq:line:W}, \ref{algo:forq:line:loopV}, \ref{algo:forq:line:V} and \ref{algo:forq:line:U} is \(n_\calA\), \(2^{n_\calB}\), \( n_\calA \cdot 2^{(2 {n_\calB}^2)}\), \( 2^{(2 {n_\calB}^2)}\) and \(2^{n_\calB}\), respectively.
Since all loops are nested, we multiply these bounds to end up with \(n_\calA^2 \cdot 2^{\mathcal{O}(n_\calB^2)}\) as an upper bound on the number of membership queries.
The runtime for each ultimately periodic word membership query (with a stem, a period and \(\calB\) as input) is upper bounded by an expression polynomial in the size \(n_\calB\) of \(\calB\), \(2^{n_\calB}\) for the length of the stem and \(2^{(2 {n_\calB}^2)}\) for the length of the period.

We conclude from the above that the runtime of the algorithm of \figurename~\ref{algo:forq} is at most \(|\Sigma| \cdot n_\calA^2 \cdot 2^{\mathcal{O}(n_\calB^2)}\).

\section{Implementation and experiments}\label{sec:implementation}

We implemented the \FORQ-based algorithm of \figurename~\ref{algo:forq} instantiated by the structural \FORQ in a tool called \FORKLIFT\cite{forklift-github}.
In this section, we provide algorithmic details about \FORKLIFT and then analyze how it behaves in practice (Section~\ref{sub:evaluation}).

\paragraph{Data structures.}
Comparing two words given a structural \FORQ requires to compute the corresponding sets of target for stems (\textRunI), and sets of context for periods (\textRunF).
A na\"ive implementation would be to compute \textRunI and \textRunF every time a comparison is needed.
We avoid to compute this information over and over again by storing each stem together with its \textRunI set and each period together with its \textRunF set.

Moreover, the function \textRcat inserts new words in the input vector by concatenating a letter on the right to some words already in the vector.
In our implementation, we do not recompute the associated set of targets nor context for the newly computed word from scratch.
For all stem \(u \in \Sigma^*\) and all letter \(a \in \Sigma\), the set of states \(\runI{ua}\) can be computed from  \(\runI{u}\) thanks to the following equality essentially stating that \(\runI{}\) can be computed inductively:
\[
  \runI{ua} = \big\{ q\in Q \st q'\in\runI{u},\, \calB\colon q'\run{a} q \big\}\enspace .
\]
Analogously, for all period \(v \in \Sigma^+\), all  \(X \subseteq Q\) and all \(a \in \Sigma\), the set of contexts \(\runF[X]{va}\) can be computed from \(\runF[X]{v}\) thanks to the following equality:
\[
	\runF[X]{va} =  \bigg\{ (q_0,q,k)\in Q^2\times\{\bot,\top\} ~\Big|~ \begin{array}{l}
		(q_0,q',k')\in  \runF[X]{v},\, \calB\colon q'\run{a}q
		\smallskip\\
    (k=\bot\lor k'=\top \lor \calB\colon q' \run{a}_F q)
	\end{array} \bigg\}\enspace .
\]
Intuitively \textRunF can be computed inductively as we did for \textRunI. 
The first part of the condition defines how new context are obtained by appending a transition to the right of an existing context while the second part defines the bit of information keeping record of whether an accepting state was visited. 

\paragraph{Bases, Frontier and Membership test.}
We stated in Remark~\ref{rmk:finitebasis} that the correctness of the \FORQ-based algorithm is preserved when removing, from the computed sets, zero or more subsumed words for the corresponding ordering.
In \FORKLIFT, we remove all the subsumed words from all the sets we compute which, intuitively, means each computed set is a basis that contains as few words as possible.
To remove subsumed words we leverage the target or context sets kept along with the words.
It is worth pointing out that the least fixpoint computations at lines~\ref{algo:forq:line:loopW}, \ref{algo:forq:line:loopU}, and \ref{algo:forq:line:loopV} are implemented using a frontier.
Finally, the ultimately periodic word membership procedure is implemented as a classical depth-first search as described in textbooks \cite[Chapter~13.1.1]{esparzaAutomataTheoryAlgorithmic2017}.

\paragraph{Technical details.}
\FORKLIFT, a na\"ive prototype implemented by a single person over several weeks, implements the algorithm of \figurename~\ref{algo:forq} with the structural \FORQ in less than 1\,000 lines of Java code.
One of the design goals of our tool was to have simple code that could be easily integrated in other tools.
Therefore, our implementation relies solely on a few standard packages from the Java SE Platform (notably collections such as \texttt{HashSet} or \texttt{HashMap}).

\subsection{Experimental Evaluation}%
\label{sub:evaluation}

\paragraph{Benchmarks.}

Our evaluation uses benchmarks stemming from various application domains including benchmarks from theorem proving, software verification, and from previous work on the \(\omega\)-regular language inclusion problem.
In this section, a \emph{benchmark} means an ordered pair of \BAs such that the “left”/“right” \BAs refer, resp., to the automata on the left/right of the inclusion sign.
The \BAs of the Pecan \cite{DBLP:journals/corr/abs-2102-01727} benchmarks encode sets of solutions of predicates, hence a logical implication between predicates reduces to a language inclusion problem between \BAs.
The benchmarks correspond to theorems of  type \(\forall x, \exists y,\, P(x) \implies Q(y)\) about Sturmian words \cite{ReedCSL22}.
We collected 60 benchmarks from Pecan for which inclusion holds, where the \BAs have alphabets of up to 256 symbols and have up to 21\,395 states.

The second collection of benchmarks stems from software verification.
The Ultimate Automizer (UA)~\cite{DBLP:conf/tacas/HeizmannCDGHLNM18,DBLP:conf/cav/HeizmannHP13} benchmarks encode termination problems for programs where the left \BA models a program and the right \BA its termination proof.
Overall, we collected 600 benchmarks from UA for which inclusion holds for all but one benchmark.
The \BAs have alphabets of up to 13\,173 symbols and are as large as 6\,972 states.

The RABIT benchmarks are \BAs modeling mutual exclusion algorithms~\cite{DBLP:conf/concur/AbdullaCCHHMV11}, where in each benchmark one \BA is the result of translating a set of guarded commands defining the protocol while the other \BA translates a modified set of guarded commands, typically obtained by randomly weakening or strengthening one guard.
The resulting \BAs are on a binary alphabet and are as large as 7\,963 states.
Inclusion holds for 9 out of the 14 benchmarks.

All the benchmarks are publicly available on GitHub~\cite{bait-benchmarks-github}.
We used all the benchmarks we collected, that is, we discarded no benchmarks.

\paragraph{Tools.}%
\label{sub:tools}
We compared \FORKLIFT with the following tools: SPOT 2.10.3, GOAL (20200822), RABIT 2.5.0, ROLL 1.0, and BAIT 0.1. 

\begin{description}
 \item[SPOT] \cite{DBLP:conf/atva/Duret-LutzLFMRX16,alexandreduret-lutzSpotSpot10} decides inclusion problems by complementing the “right” \BA via determinization to parity automata with some additional optimizations including simulation-based optimizations. %
 It is invoked through the command line tool \texttt{autfilt} with the option \texttt{--included-in}.
 It is worth pointing out that SPOT works with symbolic alphabets where symbols are encoded using Boolean propositions, and sets of symbols are represented and processed using OBDDs.
 SPOT is written in {C\nolinebreak[4]\hspace{-.05em}\raisebox{.4ex}{\tiny\bf ++}} and its code is publicly available~\cite{spot-tool}.

 \item[GOAL] \cite{DBLP:conf/cav/TsaiTH13} contains several language inclusion checkers available with multiple options.
 We used the Piterman algorithm using the options \texttt{containment -m piterman} with and without the additional options \texttt{-sim -pre}.
 In our plots \(\text{GOAL}\) is the invocation with the additional options \texttt{-sim -pre} which compute and use simulation relations to further improve performance while \(\text{GOAL}^{-}\) is the one without the additional options.
 Inclusion is checked by constructing on-the-fly the intersection of the “left” \BA and the complement of the “right” \BA which is itself built on-the-fly by the Piterman construction \cite{Piterman2007}.
 The Piterman check was deemed the “best effort” (cf.~\cite[Section~9.1]{DBLP:journals/lmcs/ClementeM19} and \cite{DBLP:journals/corr/TsaiFVT14}) among the inclusion checkers provided in GOAL.
 GOAL is written in Java and the source code of the release we used is not publicly available~\cite{goal-tool}.

 \item[RABIT] \cite{DBLP:journals/lmcs/ClementeM19} performs the following operations to check inclusion:
 (1)~Removing dead states and minimizing the automata with simulation-based techniques, thus yielding a smaller instance;
 (2)~Witnessing inclusion by simulation already during the minimization phase; 
 (3)~Using a Ramsey-based method with antichain heuristics
 to witness inclusion or non-inclusion.
 The antichain heuristics of Step~(3) uses a unique quasiorder leveraging simulation relations to discard candidate counterexamples.
 In our experiments we ran RABIT with options \texttt{-fast -jf} which RABIT states as providing the ``best performance''.
 RABIT is written in Java and is publicly available~\cite{rabit-tool}.
 
 \item[ROLL] \cite{Li2020b,DBLP:conf/tacas/LiSTCX19} contains an inclusion checker that does a preprocessing analogous to that of RABIT and then relies on automata learning and word sampling techniques to decide inclusion.
 ROLL is written in Java and is publicly available~\cite{roll}.

 \item[BAIT] \cite{DBLP:conf/concur/DoveriGPR21}  which shares authors with the authors of the present paper, implements a Ramsey-based algorithm with the antichain heuristics where two quasiorders (one for the stems and the other for the periods) are used to discard candidate counterexamples as described in Section~\ref{sec:introduction}.
 BAIT is written in Java and is publicly available~\cite{bait-github}.
\end{description}

 As far as we can tell all the above implementations, including \FORKLIFT, are sequential except for RABIT which, using the \texttt{-jf} option, performs some computations in a separate thread.

\paragraph{Experimental Setup.}
We ran our experiments on a server with \(24\,\text{GB}\) of RAM, 2 Xeon E5640 \(2.6~\text{GHz}\) CPUs and Debian Stretch \(64\)-bit.
We used openJDK 11.0.12 2021-07-20 when compiling Java code and ran the JVM with default options.
For RABIT, BAIT and \FORKLIFT the execution time is computed using timers internal to their implementations.
For ROLL, GOAL and SPOT the execution time is given by the ``real'' value of the \texttt{time(1)} command.
We preprocessed the benchmarks passed to \FORKLIFT and BAIT  with a reduction of the set of final states of the ``left'' \BA that does not alter the language it recognizes.
This preprocessing aims to minimize the number of iterations of the loop at line~\ref{algo:forq:line:final} of \figurename~\ref{algo:forq} over the set of final states.
It is carried out by GOAL using the \texttt{acc -min} command. 
Internally, GOAL uses a polynomial time algorithm that relies on computing strongly connected components.
The time taken by this preprocessing is negligible.

\paragraph{Plots.}%
\label{sub:plots}

We use survival plots for displaying our experimental results in \figurename~\ref{fig:experiments}.
Let us recall how to obtain them for a family of benchmarks \(\{p_i\}_{i=1}^n\):
(1)~run the tool on each benchmark \(p_i\) and store its runtime \(t_i\);
(2)~sort the \(t_i\)'s in increasing order and discard pairs corresponding to abnormal program termination like time out or memory out;
(3)~plot the points \((t_1,1), (t_1+t_2,2)\),\ldots, and in general \((\sum_{i=1}^k t_i,k)\);
(4)~repeat for each tool under evaluation.

Survival plots are effective at comparing how tools scale up on benchmarks: the further right and the flatter a plot goes, the better the tool thereof scales up. 
Also the closer to the \(x\)-axis a plot is, the less time the tool needs to solve the benchmarks. 

\begin{figure}[b!]
\centering
\subfloat[Benchmarks from Pecan]{
\begin{tikzpicture}[scale=.65]
			\begin{axis}[width=15cm, height=10cm, ylabel=time (ms), ymode=log, yticklabel pos=right, xlabel=\# instances, xmin=41, xmax=60, xtick={41,43,52,54,57,58,59,60}, legend style={legend columns=2, legend pos=north west}, tick style={grid=major}, scaled ticks=false, tick label style={/pgf/number format/fixed}]
				\addplot[mark size=2.5pt, line width=1pt, mark=diamond,color=cyan!85!blue]         file {pecan-goal.dat} ;%
				\addplot[mark size=2.5pt, line width=1pt, mark=diamond*,color=cyan!15!blue]  file {pecan-goal-minus.dat} ;%
				\addplot[mark size=2.5pt, line width=1pt, mark=star,color=brown!80!violet]          file {pecan-rabit.dat} ;%
				\addplot[mark size=2.5pt, line width=1pt, mark=square*,color=green!25!gray]  file {pecan-bainc.dat} ;%
        \addplot[mark size=2.5pt, line width=1pt, mark=Mercedes star, color=yellow!50!orange] file {pecan-spot-2_10_3.dat} ;%
				\addplot[mark size=2.5pt, line width=1pt, mark=*,color=magenta!20!red] file {forklift-last-pecan.dat}; %
				\addplot[mark size=2.5pt, line width=1pt, mark=|,color=black]      file {pecan-roll.dat} ;%
				\legend{GOAL, GOAL\(^{-}\), RABIT, BAIT, SPOT, \FORKLIFT, ROLL}
			\end{axis}
		\end{tikzpicture}
\label{fig:label:pecan}}
\end{figure}
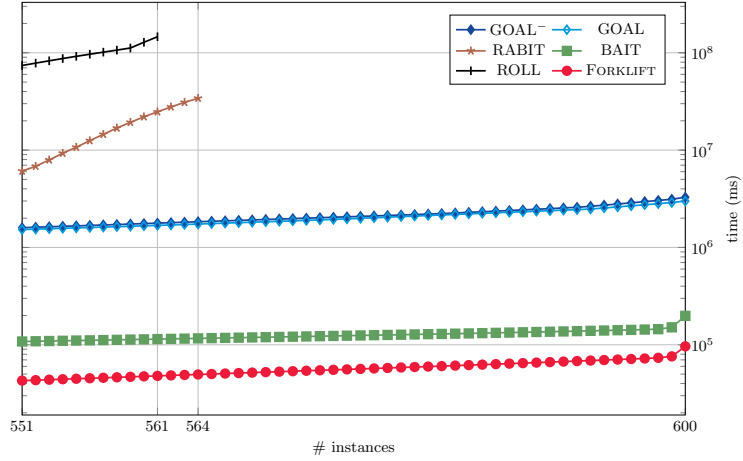
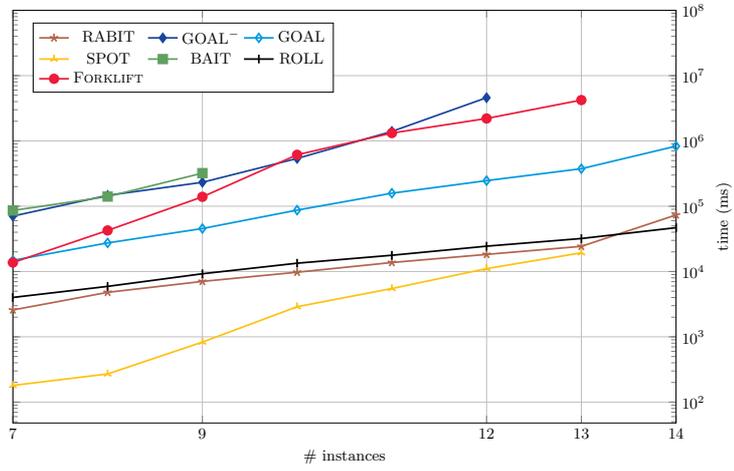
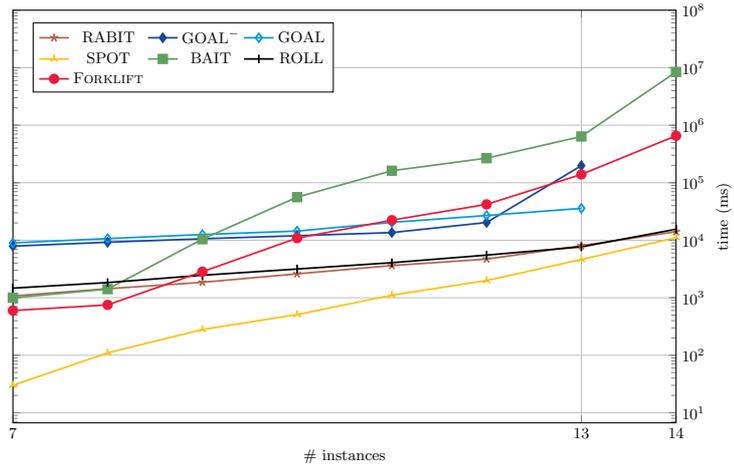
\begin{figure}[ph!]
\centering
\ContinuedFloat
\subfloat[Benchmarks from Ultimate Automizer\label{fig:label:ultatom}]{
	\begin{tikzpicture}[scale=.65]
			\begin{axis}[ylabel=time (ms), ymode=log, yticklabel pos=right, xlabel=\# instances, xmin=551, xmax=600, xtick={551, 561, 564, 600}, tick style={grid=major}, scaled ticks=false, tick label style={/pgf/number format/fixed}, legend style={legend columns=2, legend pos=north east}, width=15cm, height=10cm]
				\addplot[mark size=2.5pt, line width=1pt, mark=diamond*,color=cyan!15!blue] file {ultimate-goal-minus.dat} ;%
				\addplot[mark size=2.5pt, line width=1pt, mark=diamond,color=cyan!85!blue]        file {ultimate-goal.dat} ;%
				\addplot[mark size=2.5pt, line width=1pt, mark=star,color=brown!80!violet]         file {ultimate-rabit.dat} ;%
				\addplot[mark size=2.5pt, line width=1pt, mark=square*,color=green!25!gray] file {ultimate-bainc.dat} ;%
				\addplot[mark size=2.5pt, line width=1pt, mark=|,color=black]     file {ultimate-roll.dat} ;%
				\addplot[mark size=2.5pt, line width=1pt, mark=*,color=magenta!20!red] file {forklift-last-UA.dat}; %
				\legend{GOAL\(^{-}\), GOAL, RABIT, BAIT, ROLL, \FORKLIFT}
			\end{axis}
    \end{tikzpicture}
}\\
\subfloat[Benchmarks from RABIT\label{fig:label:rabit}]{
		\begin{tikzpicture}[scale=.65]
			\begin{axis}[width=15cm, height=10cm, ymode=log, scaled ticks=false, xlabel=\# instances, tick label style={/pgf/number format/fixed}, ymax=10e7, xtick={7,9,12,13,14}, tick style={grid=major}, xmin=7, xmax=14, yticklabel pos=right, legend style={legend columns=3, legend pos=north west}, ylabel=time (ms)]
				\addplot[mark size=2.5pt, line width=1pt, mark=star,color=brown!80!violet] file {rabit-survivor-rabit.dat} ;%
				\addplot[mark size=2.5pt, line width=1pt, mark=diamond*,color=cyan!15!blue] file {rabit-survivor-goal-minus.dat} ;%
				\addplot[mark size=2.5pt, line width=1pt, mark=diamond,color=cyan!85!blue] file {rabit-survivor-goal.dat} ;%
        \addplot[mark size=2.5pt, line width=1pt, mark=Mercedes star, color=yellow!50!orange] file {rabit-spot-2_10_3.dat} ;%
				\addplot[mark size=2.5pt, line width=1pt, mark=square*,color=green!25!gray]  file {rabit-survivor-bait.dat} ;%
				\addplot[mark size=2.5pt, line width=1pt, mark=|,color=black] file {rabit-survivor-roll.dat} ;%
				\addplot[mark size=2.5pt, line width=1pt, mark=*,color=magenta!20!red] file {forklift-last-rabit.dat}; %
				\legend{RABIT, GOAL\(^{-}\), GOAL, SPOT, BAIT, ROLL, \FORKLIFT}
			\end{axis}
		\end{tikzpicture}
}
\\
\subfloat[Benchmarks from RABIT (reduced)\label{fig:label:rabitreduced}]{
		\begin{tikzpicture}[scale=.65]
			\begin{axis}[width=15cm, height=10cm, ymode=log, scaled ticks=false, xlabel=\# instances, tick label style={/pgf/number format/fixed}, ymax=10e7, ylabel=time (ms), xtick={7,13,14}, tick style={grid=major}, xmin=7, xmax=14, yticklabel pos=right, legend style={legend columns=3, legend pos=north west}]
				\addplot[mark size=2.5pt, line width=1pt, mark=star,color=brown!80!violet] file {rabit-survivor-rabit-reduced.dat} ;%
				\addplot[mark size=2.5pt, line width=1pt, mark=diamond*,color=cyan!15!blue] file {rabit-survivor-goal-minus-reduced.dat} ;%
				\addplot[mark size=2.5pt, line width=1pt, mark=diamond,color=cyan!85!blue] file {rabit-survivor-goal-reduced.dat} ;%
        \addplot[mark size=2.5pt, line width=1pt, mark=Mercedes star, color=yellow!50!orange] file {rabit-spot-2_10_3-reduced.dat} ;%
				\addplot[mark size=2.5pt, line width=1pt, mark=square*,color=green!25!gray]  file {rabit-survivor-bait-reduced.dat} ;%
				\addplot[mark size=2.5pt, line width=1pt, mark=|,color=black] file {rabit-survivor-roll-reduced.dat} ;%
				\addplot[mark size=2.5pt, line width=1pt, mark=*,color=magenta!20!red]  file {rabit-survivor-forklift-reduced.dat} ;%
				\legend{RABIT, GOAL\(^{-}\), GOAL, SPOT, BAIT, ROLL, \FORKLIFT}
			\end{axis}
		\end{tikzpicture}
}
\caption{
  Survival plot with a logarithmic \(y\) axis and linear \(x\) axis. Each benchmark has a timeout value of 12h. Parts of the plots left out for clarity. A point is plotted for abscissa value \(x\) and tool \(r\) if{}f \(r\) returns with an answer for \(x\) benchmarks.
  All the failures of BAIT and the one of \FORKLIFT are memory out.%
  \label{fig:experiments}
}
\end{figure}

\paragraph{Analysis.}
It is clear from \figurename~\ref{fig:label:pecan} and \ref{fig:label:ultatom} that \FORKLIFT scales up best on both the Pecan and UA benchmarks.
\FORKLIFT's scalability is particularly evident on the PECAN benchmarks of \figurename~\ref{fig:label:pecan} where its curve is the flattest and no other tool finishes on all benchmarks. 
Note that, in \figurename~\ref{fig:label:ultatom}, the plot for SPOT is missing because we did not succeed into translating the UA benchmarks in the input format of SPOT.
On the UA benchmarks, \FORKLIFT, BAIT and GOAL scale up well and we expect SPOT to scale up at least equally well.
On the other hand, RABIT and ROLL scaled up poorly on these benchmarks.

On the RABIT benchmarks at~\figurename~\ref{fig:label:rabit} both \FORKLIFT and SPOT terminate 13 out of 14 times; BAIT terminates 9 out of 14 times; and GOAL, ROLL and RABIT terminate all the times.
We claim that the RABIT benchmarks can all be solved efficiently by leveraging simulation relations which \FORKLIFT does not use let alone compute.
Next, we justify this claim.
First observe at \figurename~\ref{fig:label:rabit} how GOAL is doing noticeably better than GOAL\(^{-}\) while we have the opposite situation for the Pecan benchmarks \figurename~\ref{fig:label:pecan} and no noticeable difference for the UA benchmarks \figurename~\ref{fig:label:ultatom}.
Furthermore observe how ROLL and RABIT, which both leverage simulation relations in one way or another, scale up well on the RABIT benchmarks but scale up poorly on the PECAN and UA benchmarks.

The reduced RABIT benchmarks at~\figurename~\ref{fig:label:rabitreduced} are obtained by pre-processing every \BA of every RABIT benchmark with the simulation-based reduction operation of SPOT given by \texttt{autfilt --high --ba}.
This preprocessing reduces the state space of the \BAs by more than 90\% in some cases.
The reduction significantly improves how \FORKLIFT scales up (it now terminates on all benchmarks) while it has less impact on RABIT, ROLL and SPOT which, as we said above, already leverage simulation relation internally.
It is also worth noting that GOAL has a regression (from 14/14 before the reduction to 13/14).

Overall \FORKLIFT, even though it is a prototype implementation, is the tool that returns most often (673/674).
Its unique failure %
disappears after a preprocessing using simulation relations of the two \BAs. 
The \FORKLIFT curve for the Pecan benchmarks shows \FORKLIFT scales up best.

Our conclusion from the empirical evaluation is that, in practice \FORKLIFT is competitive compared to the state-of-the-art in terms of scalability.
Moreover the behavior of the \FORQ-based algorithm in practice is far from its worst case exponential runtime.

\section{Discussions}\label{sec:conclusion}

This section provides information that we consider of interest although not essential for the correctness of our algorithm or its evaluation.

\paragraph{Origin of \FORQs.}
Our definition of \FORQ and their suitability property (in particular the language preservation)  are directly inspired from the definitions related to families of right congruences introduced by Maler and Staiger in 1993~\cite{DBLP:conf/stacs/MalerS93} (revised in 2008~\cite{malerSyntacticCongruencesLanguages2008a}).
We now explain how our definition of \FORQs generalizes and relaxes previous definitions~\cite[Definitions~5 and~6]{malerSyntacticCongruencesLanguages2008a}.

First we explain why the \FORQ constraint does not appear in the setting of families of right congruences.
In the context of congruences, relations are symmetric and thus, the \FORQ constraint reduces to \(u \leqI u' \Rightarrow \mathord{\leqF_{u'}} = \mathord{\leqF_u}\).
Therefore the \FORQ constraint trivially holds if the set \(\{\leqF_u\}_{u\in\Sigma^*}\) is quotiented by the congruence relation \(\leqI\), which is the case in the definition~\cite[Definition~5]{DBLP:journals/tcs/MalerS97}.

Second, we point that the condition \(v \leqF_u v' \Rightarrow uv \leqI uv'\) which appears in the definition for right families of congruences~\cite[Definition~5]{malerSyntacticCongruencesLanguages2008a} is not needed in our setting.
Nevertheless, this condition enables an improvement of the \FORQ-based algorithm that we describe next.

\paragraph{Less membership queries.}
We put forward a property of structural \FORQs allowing us to reduce the number of membership queries performed by \FORKLIFT.
Hereafter, we refer to the \emph{picky constraint} as the property of a \FORQ stating \(v \leqF_u v' \Rightarrow uv \leqI uv'\) where \(u, v, v' \in \Sigma^*\). 
We first show how thanks to the picky constraint we can reduce the number of candidate counterexamples in the \FORQ-based algorithm and then, we show that every structural \FORQ satisfies the picky constraint.

In the algorithm of \figurename~\ref{algo:forq}, periods are taken in a basis for the ordering \(\leqF_w\) where \(w \in \Sigma^*\) belongs to a finite basis for the ordering \(\leqI^{-1}\).
The only restriction on \(w\) is that of being comparable to the stem \(u\), as ensured by the test at line \ref{algo:forq:line:U}.
The following lemma formalizes the fact that we could consider a stronger restriction.

\begin{lemma}\label{lemma:goodmax}
	Let \(\mathord{\leqI}\) be a quasiorder  over \(\Sigma^*\) such that \(\mathord{\leqI^{-1}}\) is a right-monotonic well-quasiorder.
  Let \(S, S' \subseteq \Sigma^*\) be such that \(\base[\mathord{\leqI^{-1}}]{S'}{S}\) and \(S'\) contains no two distinct comparable words.
  For all \(u\in\Sigma^*\) and \(v\in\Sigma^+\) such that \(u\in S\) and \(\{wv \st w\in S\} \subseteq S\), there exists \(\mathring{w}\in S'\) such that \(uv^i \leqI \mathring{w}\) and \(\mathring{w}v^j \leqI \mathring{w}\) for some \(i, j \in \N\setminus\{0\}\).
\end{lemma}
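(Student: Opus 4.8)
The plan is to build the desired element $\mathring{w}$ by iterating the map $x \mapsto xv$ starting from $u$ and then projecting into the basis $S'$, exploiting that $\mathord{\leqI^{-1}}$ is a well-quasiorder to extract a suitable repetition. First I would observe that the hypothesis $\{wv \mid w\in S\}\subseteq S$ together with $u\in S$ gives, by an easy induction, that $uv^i\in S$ for every $i\in\N$. Since $\base[\mathord{\leqI^{-1}}]{S'}{S}$, each $uv^i$ is subsumed for $\mathord{\leqI^{-1}}$ by some element of $S'$; equivalently, there is a map $f\colon\N\to S'$ with $uv^i \leqI f(i)$ for all $i$ (here I use that $x \leqI^{-1} y$ means $y \leqI x$, so "$y$ subsumes $x$ in the basis for $\leqI^{-1}$" reads $x \leqI y$).

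Next I would apply the well-quasiorder property of $\mathord{\leqI^{-1}}$ to the sequence $\{f(i)\}_{i\in\N}$ in $S'$: there exist $i<j$ with $f(i) \leqI^{-1} f(j)$, i.e.\ $f(j) \leqI f(i)$. Now I want $f(i)$ and $f(j)$ to be \emph{equal}; this is exactly where the "no two distinct comparable words" hypothesis on $S'$ enters. To use it I also need the reverse comparison $f(i) \leqI f(j)$. I would get this from right-monotonicity of $\mathord{\leqI}$: I'll show $f(i)v^{j-i} \leqI f(j)$ is not quite immediate, so instead I proceed slightly differently — take the infinite sequence and extract indices more carefully. Concretely, consider the pair $(i,j)$ as above and set $\mathring{w} \defeq f(i)$. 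From $uv^i \leqI \mathring{w}$ and right-monotonicity, $uv^j = (uv^i)v^{j-i} \leqI \mathring{w}v^{j-i}$. Also $uv^j \leqI f(j) \leqI f(i) = \mathring{w}$. These two facts together do not immediately chain, so the cleaner route is: first prove a self-domination statement $\mathring{w}v^{k}\leqI\mathring{w}$ for suitable $k$, then derive $uv^{i}\leqI\mathring{w}$ separately (which we already have with $i$ the first index).

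For the self-domination: apply the well-quasiorder to the sequence $\{f(i)v^{i}\}$ or, more robustly, argue as in the footnote proof of $\UltimA$ in Section~\ref{sec:crux}. Set $\mathring{w}=f(i)$ for the $i$ from the wqo extraction; since $\mathring{w}\in S'\subseteq S$ and $\{wv\mid w\in S\}\subseteq S$, we have $\mathring{w}v^{m}\in S$ for all $m$, hence $\mathring{w}v^{m}\leqI g(m)$ for some $g(m)\in S'$. By the wqo on $\mathord{\leqI^{-1}}$ there are $m<m'$ with $g(m')\leqI g(m)$, giving $\mathring{w}v^{m'}\leqI g(m')\leqI g(m)$ while also $\mathring{w}v^{m}\leqI g(m)$; right-monotonicity then yields $\mathring{w}v^{m'}=(\mathring{w}v^{m})v^{m'-m}\leqI g(m)v^{m'-m}$, and combined with the antichain property of $S'$ one forces $g(m)=g(m')$, collapsing the chain to $\mathring{w}v^{m'-m}\leqI g(m)$ and $\mathring{w}v^{m}\leqI g(m)$ with $g(m)\in S'$. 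Iterating/relabelling, set $j=m'-m\in\N\setminus\{0\}$ and replace $\mathring{w}$ by $g(m)$ if necessary; transitivity gives $\mathring{w}v^{j}\leqI\mathring{w}$ and $uv^{i}\leqI\mathring{w}$ for appropriate $i,j\geq 1$ (if the first index $i$ is $0$, replace it by $i+j$ using $uv^{i+j}\leqI uv^{i}\cdot$ — rather, use that $uv^{0}=u\leqI\mathring{w}$ already implies $uv^{j}\leqI\mathring{w}v^{j}\leqI\mathring{w}$, so $i$ can always be taken positive).

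The main obstacle I anticipate is the bookkeeping needed to simultaneously secure \emph{both} conclusions ($uv^{i}\leqI\mathring{w}$ and $\mathring{w}v^{j}\leqI\mathring{w}$) with the \emph{same} witness $\mathring{w}\in S'$ and with strictly positive exponents, while correctly using the antichain hypothesis on $S'$ to turn a wqo-comparability into an equality (which is what prevents the index extraction from "drifting" to ever-larger elements of $S'$). Everything else — the inductions showing $uv^{i},\mathring{w}v^{m}\in S$, the passage from a basis for $\leqI^{-1}$ to a domination map, and the uses of right-monotonicity and transitivity — is routine and mirrors the wqo arguments already carried out in the footnote to the proof of \eqref{eq:crux:equivalence}.
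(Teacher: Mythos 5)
There is a genuine gap, and it sits exactly at the step you flag as the "main obstacle": securing the self-domination \(\mathring{w}v^{j}\leqI\mathring{w}\). In your construction you fix a word \(\mathring{w}\) and dominate each \(\mathring{w}v^{m}\) independently by some \(g(m)\in S'\). After extracting \(m<m'\) with \(g(m')\leqI g(m)\) and (legitimately, via the antichain hypothesis) concluding \(g(m)=g(m')\), all you actually possess is \(\mathring{w}v^{m}\leqI g(m)\), \(\mathring{w}v^{m'}\leqI g(m)\) and, by right-monotonicity, \(\mathring{w}v^{m'}\leqI g(m)v^{m'-m}\). In every one of these inequalities the fixed word \(\mathring{w}\) sits on the small side and the basis element only on the large side; right-monotonicity lets you append \(v\)'s to both sides of an inequality you already have, but it never lets you move the dominator to the small side. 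So the claimed "collapse" to \(\mathring{w}v^{m'-m}\leqI g(m)\) is a non sequitur, and even if it held it would still not be the required statement \(g(m)v^{j}\leqI g(m)\) (or \(\mathring{w}v^{j}\leqI\mathring{w}\)) for any \(j\geq 1\). The same structural defect is why your first route (dominating the \(uv^{i}\) directly) fails: a repetition among the dominators of a fixed orbit \(xv^{0},xv^{1},\dots\) gives two words dominated by the same basis element, never a basis element dominated by itself times a power of \(v\).

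The paper closes this gap by interleaving the domination with the multiplication by \(v\): it works with the finitely many minimal elements \(\mathring{w}_1,\dots,\mathring{w}_k\) of \(S\) for \(\leqI^{-1}\) and defines a successor map \(f\) choosing, for each \(\mathring{w}_i\), a minimal element with \(\mathring{w}_i v\leqI\mathring{w}_{f(i)}\); each new step re-dominates the \emph{previous dominator} times \(v\), not the original word times a higher power. An easy induction using right-monotonicity then shows \(\mathring{w}_{f^{n}(\ell)}v^{m}\leqI\mathring{w}_{f^{n+m}(\ell)}\), where \(\mathring{w}_\ell\) dominates \(u\), and a pigeonhole argument on the finite index set produces \(0<x<y\) with \(f^{x}(\ell)=f^{y}(\ell)\); the single witness \(\mathring{w}\defeq\mathring{w}_{f^{x}(\ell)}\) then satisfies both \(uv^{x}\leqI\mathring{w}\) and \(\mathring{w}v^{y-x}\leqI\mathring{w}\), with strictly positive exponents for free. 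Your proposal could be repaired along the same lines (define \(h(0)\in S'\) dominating \(u\) and \(h(n+1)\in S'\) dominating \(h(n)v\), prove \(h(n)v^{m}\leqI h(n+m)\), then use the wqo increasing pair together with the antichain property of \(S'\) — instead of the paper's pigeonhole — to force \(h(x)=h(y)\)), but as written the decisive inequality is not established. The routine parts you mention (closure of \(S\) under appending \(v\), passage from the basis to a domination map, the final adjustment making \(i\) positive via \(u\leqI\mathring{w}\Rightarrow uv^{j}\leqI\mathring{w}v^{j}\leqI\mathring{w}\)) are fine.
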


As in Section~\ref{sec:crux}, we show that the equivalence \eqref{eq:crux:equivalence} holds but this time for an alternative definition of \(T_\calA\) we provide next.
Given a \(M\)-suitable \FORQ \(\calF \defeq \tuple{\mathord{\leqI}, \{\mathord{\leqF_{u}}\}_{u \in \Sigma^{*}}}\), let
\[
\hat{T}_\calA \defeq \{ \lasso{u}{v} \mid \exists s\in F_\calA\colon u\in U_s,  v\in V_s^w \text{ for some \(w \in W_s\) with \(u\leqI w, wv \leqI w\)}\}
\]
where  for all \(p\in P\) the sets \(U_p\), \(W_p\) and \(\{V_p^w\}_{w\in \Sigma^*}\) such that \(\base[\mathord{\leqI}]{U_p}{\stem_p}\), \(\base[\mathord{\leqI^{-1}}]{W_p}{\stem_p}\) and \(\base[\mathord{\leqF_{w}}]{V_p^w}{\per_p}\) for all \(w \in \Sigma^*\).
Since \(\hat{T}_\calA \subseteq T_\calA \) by definition, it suffices to prove the implication \(\hat{T}_\calA  \subseteq M \Rightarrow \UltimA \subseteq M\).
Let \(\lasso{u}{v} \in \UltimA\), i.e., such that there exists \(s\in F_\calA\) for which \(u \in \stem_s\) and \(v \in \per_s\), satisfying \(uv \leqI u\).
In the context of Lemma~\ref{lemma:goodmax}, taking \(S \defeq \stem_s\) and \(S'\defeq W_s\) fulfills the requirements \(u \in S\) and \(\{wv \st w \in S\} \subseteq S\).
We can thus apply the lemma and ensure the existence of some \(w_0 \in W_s\) satisfying \(uv^i \leqI w_0\) and \(w_0v^j \leqI w_0\) for some \(i, j \in \N\setminus\{0\}\).
Since \(uv^i \in \stem_s\) and \(v^j \in \per_s\) we find that there exist \(u_0\in U_s\) and \(v_0\in V_s^{w_0}\) such that \(u_0 \leqI uv^i\) and \(v_0 \leqF_{w_0} v^j\) thanks to the finite basis property.
We conclude from above that \(v_0 \leqF_{w_0} v^j\), hence that  \(w_0v_0 \leqI w_0v^j\) by the picky condition, and finally that \(w_0v_0\leqI w_0\) by Lemma~\ref{lemma:goodmax} and transitivity.
By definition \(\lasso{u_0}{v_0} \in \hat{T}_\calA\) and the proof continues as the one in Section~\ref{sec:crux} for \(T_\calA\).

To summarize, if the considered \FORQ fulfills the picky constraint then the algorithm of \figurename~\ref{algo:forq} remains correct when discarding the periods \(v\) at line~\ref{algo:forq:line:V} such that \(w v\not\leqI w\).
Observe that discarding one period \(v\) possibly means skipping several membership queries (\(\lasso{u_1}{v}, \lasso{u_2}{v},\ldots\)).
As proved below, the picky constraint holds for all structural \FORQs.

\begin{lemma}\label{lemma:picky}
	Let \(\calB \defeq (Q, q_I, \Delta_\calB, F_\calB)\) be a \BA and \(\calF \defeq \tuple{\mathord{\leqI^\calB}, \{\mathord{\leqF^\calB_{u}}\}_{u\in \Sigma^*}}\) its structural \FORQ.
	For all \(u \in \Sigma^*\) and all \(v, v' \in \Sigma^+\) if \(v \leqF^\calB_u v'\) then \(uv \leqI^\calB uv'\).
\end{lemma}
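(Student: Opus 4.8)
The plan is to unfold both sides of the desired conclusion through Definition~\ref{def:struct} and reduce the statement to a one-line inclusion between target sets. By definition, \(uv \leqI^\calB uv'\) means exactly \(\runIB{uv} \subseteq \runIB{uv'}\). Moreover \textRunI is computed inductively (as recorded in the implementation section): \(\runIB{uv}\) is the set of states reachable from some state of \(\runIB{u}\) by reading \(v\). So the task becomes: assuming \(\runFB[{\runIB{u}}]{v} \subseteq \runFB[{\runIB{u}}]{v'}\), show that every state reached from \(\runIB{u}\) via \(v\) is also reached from \(\runIB{u}\) via \(v'\).

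First I would take an arbitrary \(q' \in \runIB{uv}\) and pick a witness state \(q\) with \(\calB \colon q_I \run{u} q\) and \(\calB \colon q \run{v} q'\); in particular \(q \in \runIB{u}\). The key idea is to package this run into a context on which the hypothesis can act: namely \((q, q', \zero)\). Because the last component is \(\zero\), the accepting-visit requirement \((k = \one \Rightarrow \calB \colon q \run{v}_F q')\) in the definition of \textRunF is vacuous, so \((q, q', \zero) \in \runFB[{\runIB{u}}]{v}\) holds for free. Applying the hypothesis \(v \leqF^\calB_u v'\) gives \((q, q', \zero) \in \runFB[{\runIB{u}}]{v'}\), which unpacks to \(\calB \colon q \run{v'} q'\). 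Composing with \(\calB \colon q_I \run{u} q\) yields \(q' \in \runIB{uv'}\), and since \(q'\) was arbitrary we conclude \(\runIB{uv} \subseteq \runIB{uv'}\), i.e.\ \(uv \leqI^\calB uv'\).

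There is essentially no obstacle: the only subtlety is choosing the bit \(\zero\) rather than \(\one\) for the context, so that membership in \(\runFB[{\runIB{u}}]{v}\) carries no obligation about accepting states. This is also why the implication is only one-directional — with \(\zero\) we recover plain reachability of \(q'\) from \(q\), but nothing in the hypothesis would let us transfer information in the reverse direction. Note that right-monotonicity of \(\leqI^\calB\) is not needed here: the statement is about the specific concatenations \(uv\) and \(uv'\), and everything follows directly from the inductive characterization of \textRunI together with the definition of \textRunF.
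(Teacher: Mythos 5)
Your proof is correct and follows essentially the same route as the paper's: pick \(q'\in\runIB{uv}\) with a witness \(q\in\runIB{u}\), observe \((q,q',\zero)\in\runFB[{\runIB{u}}]{v}\), apply the hypothesis to get \((q,q',\zero)\in\runFB[{\runIB{u}}]{v'}\), and read off \(q'\in\runIB{uv'}\). The observation that the bit \(\zero\) makes the accepting-visit condition vacuous is exactly the point the paper's proof relies on as well.
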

\begin{proof}
	For all \(q ' \in \runI{uv}\), there exists \(q \in Q\) such that \(\calB \colon q_I \run{u} q \run{v} q'\).
	Hence \((q, q', \zero) \in \runF[{\runI{u}}]{v}\).
	In fact \((q, q', \zero) \in \runF[{\runI{u}}]{v'}\) holds as well since \(v \leqF^\calB_{u} v'\).
	We deduce from the definition of \textRunF that \(\calB \colon q_I \run{u} q \run{v'} q'\) which implies \(q' \in \runI{uv'}\).
	Thus \(\runI{uv} \subseteq  \runI{uv'}\), i.e., \(uv \leqI^\calB uv'\).\qed
\end{proof}

We emphasize that this reduction of the number of membership queries was not included in our experimental evaluation since (1) the proof of correctness is simpler and (2) \FORKLIFT already scales up well without this optimization.
We leave for future work the precise effect of such optimization.%

\paragraph{Why a basis for \(\leqI^{-1}\) is computed?}
Taking periods in a basis for the ordering \(\leqF_w\) where \(w \in \Sigma^*\) is picked in a basis for the ordering \(\leqI^{-1}\) may seem unnatural.
In fact, the language preservation property of \FORQs even suggests that an algorithm without computing a basis for \(\leqI^{-1}\) may exist.
Here, we show that taking periods in a basis for the ordering \(\leqF_u\) where \(u \in \Sigma^*\) is picked in a basis for the ordering \(\leqI\) is not correct.
More precisely, redefining \(T_\calA\) as
\[ \tilde{T}_\calA \defeq \{ \lasso{u}{v} \st \exists s\in F_\calA\colon u\in U_s, v\in V_s^u\}\]
where for all \(p\in P\) we have that \(\base[\mathord{\leqI}]{U_p}{\stem_p}\) and
\(\base[\mathord{\leqF_{w}}]{V_p^w}{\per_p}\) for all \(w \in \Sigma^*\), leads to an \emph{incorrect} algorithm because the equivalence \eqref{eq:crux:equivalence} given by \(\tilde{T}_\calA \subseteq M \iff L(\calA) \subseteq M\) no longer holds as shown below in Example~\ref{ex:break}.

\begin{example} \label{ex:break}
Consider the \BAs given by \figurename~\ref{fig:example:automata}.
We have that \(L(\calA) \nsubseteq L(\calB)\) and, in Example~\ref{ex:running}, we have argued that \(T_\calA = \{\lasso{\varepsilon}{(b)}, \lasso{a}{(a)} \}\) contains the ultimately periodic \(a^\omega\) which is a counterexample to inclusion.
Recall from Example~\ref{ex:running} and \ref{example:struct} that we can set \(U_{p_I}=\{\varepsilon,a\}\) since \(\base[\leqI]{\{\varepsilon,a\}}{\Sigma^*}\), and \(V_{p_I}^{a}=V_{p_I}^{\varepsilon}=\{b\}\) since \(\base[\leqF_{a}]{\{b\}}{\Sigma^+}\) and \(\base[\leqF_\varepsilon]{\{b\}}{\Sigma^+}\).
We conclude from the above definition that \(\tilde{T}_\calA = \{ \lasso{\varepsilon}{(b)}, \lasso{a}{(b)} \}\), hence that \(\tilde{T}_\calA\subseteq L(\calB)\) which contradicts \eqref{eq:crux:equivalence} since \(L(\calA)\nsubseteq L(\calB)\).
\end{example}

\section{Conclusion and future work}

We presented a novel approach to tackle in practice the language inclusion problem between B\"uchi automata.
Our antichain heuristics is driven by the notion of \FORQs that extends the notion of family of right congruences introduced in the nineties by Maler and Staiger~\cite{DBLP:journals/tcs/MalerS97}.
We expect the notion of \FORQs to have impact beyond the inclusion problem, e.g. in  learning~\cite{lmcs:4283} and complementation~\cite{liCongruenceRelationsBuchi2021}.
A significant difference of our inclusion algorithm compared to other algorithms which rely on antichain heuristics, is the increased number of fixpoint computations that, counterintuitively, yield better scalability.
Indeed our prototype \FORKLIFT, which implements the \FORQ-based algorithm, scales up well on benchmarks taken from real applications in verification and theorem proving. 

In the future we want to increase further the search pruning capabilities of \FORQs by enhancing them with simulation relations.
We also plan to study whether \FORQs can be extended to other settings like \(\omega\)-visibly pushdown languages.

\bibliographystyle{splncs04}
\newpage
\appendix
%

\renewenvironment{proof}[1][]{\subsubsection{\ifx&#1&\textbf{Proof.} \else\textbf{#1.} \fi}}{}

\section*{Appendix}

\section{Missing Proofs}

\begin{proof}[Proof of Lemma~\ref{lemma:rcat}]
	We prove the following more general property on the \(n\)th iteration of the function \textRcatA applied on any given vector \(\vec{X} \in \super{\Sigma^*}^{|P|}\):
	\begin{equation*}
		\forall p \in P \quad \catA[n]{\vec{X}}\at{p} = \{ w_1w_2 \in \Sigma^* \st w_1 \in \vec{X}\at{p'}, \calA \colon p' \run{w_2} p, |w_2| \leq n\}\enspace .
	\end{equation*}
	Trivially \(\catA[0]{\vec{X}} = \vec{X}\) for all \(\vec{X} \in \super{\Sigma^*}^{|P|}\).
	Assuming by induction hypothesis that the statement holds for all \(n \in \N\).
	Since \(\catA[n+1]{\vec{X}}\at{p} = \catA{\catA[n]{\vec{X}}}\at{p}\) for all \(\vec{X} \in \super{\Sigma^*}^{|P|}\) and all \(p\in P\), we prove that:
	\begin{align*}
		\catA[n+1]{\vec{X}}\at{p}
		&= \catA[n]{\vec{X}}\at{p} \cup \{ wa  \st w \in \catA[n]{\vec{X}}\at{p'}, a \in \Sigma, \calA \colon p' \run{a} p\}\\
		&= \catA[n]{\vec{X}}\at{p} \cup \bigg\{ w_1w_2a ~\Big|~ \begin{array}{l}
			a \in \Sigma, |w_2| \leq n\\
			w_1 \in \vec{X}\at{p''}, \calA \colon p'' \run{w_2} p' \run{a} p
		\end{array} \bigg\}\\
		&= \catA[n]{\vec{X}}\at{p} \cup \bigg\{ w_1w_2 ~\Big|~ \begin{array}{l}
			1 \leq |w_2| \leq n+1\\
			w_1 \in \vec{X}\at{p'}, \calA \colon p' \run{w_2} p
		\end{array} \bigg\}\\
		&= \{ w_1w_2 \st w_1 \in \vec{X}\at{p'}, \calA \colon p' \run{w_2} p, |w_2| \leq n+1 \}
	\end{align*}

    \noindent
    From the initializations of the vectors \(\vec{U}_0\) and \(\vec{V}_1^{s}\) (\(s\in F_\calA\)) in \figurename~\ref{algo:forq} we find that:
		\[
			\forall p\in P,\vec{U}_0\at{p} = \{ u\in\Sigma^{\leq 0} \st \calA \colon p_I \run{u} p\}\text{ , and }\quad %
			\vec{V}_1^{s}\at{p} = \{ v \in \Sigma \st \calA \colon s \run{v} p\}\enspace .
		\]
    In particular, it is routine to check that \(\vec{U}_0\at{p} = \stem_p \cap \Sigma^{\leq 0}\) and
    \(\vec{V}_1^{s}\at{s} = \per_s \cap \Sigma\) for all \(p\in P,s\in F_\calA\).
	Next we conclude from the above result that for all \(p\in P,s\in F_\calA\):
	\begin{align*}
		\catA[n+1]{\vec{U}_0}\at{p} &= \{ u \in \Sigma^* \st \calA \colon p_I \run{u} p, |u| \leq n+1\} = \stem_p \cap \Sigma^{\leq n+1}\\
		\catA[n+1]{\vec{V}_1^{s}}\at{s} &= \{ v \in \Sigma^+ \st \calA \colon p \run{v} p, |v| \leq n+2\} = \per_s \cap \Sigma^{\leq n+2}\enspace .
	\end{align*}
\qed\end{proof}

\begin{proof}[Proof of Lemma~\ref{lemma:goodmax}]
	Since \(\leqI^{-1}\) is a well-quasiorder, \(S\) admits a finite number of minimums with respect to \(\leqI^{-1}\).
	Also \(S \neq \varnothing\) by hypothesis, implying the non-emptiness of its set of minimums.
	Let \(\mathring{w}_1, \dots, \mathring{w}_k \in S\) be the minimal elements of \(S\) with respect to the well-quasiorder \(\leqI^{-1}\).
	For all \(i \in \{1, \dots, k\}\), we have \(\mathring{w}_i v \in S\) by hypothesis, and thus there exists \(j \in \{1, \dots, k\}\) such that \(\mathring{w}_j \leqI^{-1} \mathring{w}_iv\).
	Hence, the function \(f \colon \{1, \dots, k\} \mapsto  \{1, \dots, k\}\) defined by \(f \colon i \mapsto \min \{j \st \mathring{w}_j \leqI^{-1} \mathring{w}_iv\}\) is well defined.
	In particular, \(\mathring{w}_{f(i)} \leqI^{-1} \mathring{w}_iv\) for all \(i \in \{1, \dots, k\}\) by Skolemization.
	
	Knowing \(u \in S\), we exhibit some \(\ell \in \{1, \dots, k\}\) satisfying \(\mathring{w}_\ell \leqI^{-1} u\).
	Consider the infinite sequence \(\{f^n(\ell)\}_{n\in\N}\) where \(f^0(\ell) \defeq \ell\) and \(f^{n+1}(\ell) \defeq f(f^n(\ell))\).
	We prove by induction that \(\mathring{w}_{f^{i+j}(\ell)} \leqI^{-1} \mathring{w}_{f^i(\ell)}v^j\) for all \(i,j \in \N\).
	Assume that \(j =0\), then \(\mathring{w}_{f^{i}(\ell)} \leqI^{-1} \mathring{w}_{f^i(\ell)}\) holds for all \(i\in\N\) by reflexivity.
	Assume that \(j > 0\) and \(\mathring{w}_{f^{i+j}(\ell)} \leqI^{-1} \mathring{w}_{f^i(\ell)}v^j\).
	The right-monotonicity of \(\leqI^{-1}\) implies \(\mathring{w}_{f^{i+j}(\ell)} v \leqI^{-1} \mathring{w}_{f^i(\ell)}v^{j+1}\).
	By construction \(\mathring{w}_{f^{i+j+1}(\ell)} \leqI^{-1} \mathring{w}_{f^{i+j}(\ell)}v\).
	Hence \(\mathring{w}_{f^{i+j+1}(\ell)} \leqI^{-1} \mathring{w}_{f^{i}(\ell)}v^{j+1}\) holds.
	
	As a consequence of the finiteness of the set \(\{f^n(\ell) \st n \in \N_{\neq 0} \}\), there exists \(x, y \in \N_{\neq 0}\) such that \(0 < x < y\) and \(f^x(\ell) = f^y(\ell)\).
	Thanks to the property previously proved, the following holds.
	\begin{itemize}
		\item  \(\mathring{w}_{f^{y}(\ell)} \leqI^{-1} \mathring{w}_{f^x(\ell)}v^{y-x}\) by taking \(i \defeq x\) and \(j \defeq y-x\).
		\item  \(\mathring{w}_{f^{x}(\ell)} \leqI^{-1} \mathring{w}_{\ell}v^{x}\) by taking \(i \defeq 0\) and \(j \defeq x\).
	\end{itemize}
	Let \(\mathring{w} \defeq \mathring{w}_{f^x(\ell)} = \mathring{w}_{f^y(\ell)}\).
	We have that \(\mathring{w} \leqI^{-1} \mathring{w}v^{y-x}\) and \(\mathring{w} \leqI^{-1} \mathring{w}_\ell v^x\).
	In addition \(\mathring{w}_\ell \leqI^{-1} u\) implies \(\mathring{w} \leqI^{-1} u v^x\) by right-monotonicity and transitivity of \(\leqI^{-1}\).
\qed
\end{proof}

\section{Reductions using Simulation Relations}

See \tablename~\ref{tab:rabitbenchmarks}.
\begin{table}[h!]
	\centering
	\caption{Reduction obtained when preprocessing the B\"uchi automata of the RABIT benchmarks using the \texttt{autfilt} command line utility of SPOT (2.9.7) with the options \texttt{--high --ba}. The numbers between parenthesis indicate the number non deterministic states. SCC stands for strongly connected components.\label{tab:rabitbenchmarks}}
	\begin{tabular}{|l||r@{}r|r|r||r@{}r|r|r|}
		\hline
		B\"uchi aut. & \multicolumn{4}{|c||}{stats before reduction} & \multicolumn{4}{c|}{stats after reduction}\\
		\hline
		& \#states &(nodets) & \#edges & \#SCCs & \#states &(nodets) & \#edges & \#SCCs \\
		\hline\hline
		mcsA& 1\,408 &(1\,240)& 3\,222& 1\,227& 11 &(4)& 15& 1\\
		mcsB& 7\,963 &(7\,819)& 21\,503& 19& 69 &(34)& 108& 1\\
		bakeryA& 1\,510 &(1\,195)& 2\,703& 50& 480 &(235)& 717& 4\\
		bakeryB& 1\,509 &(1\,195)& 2\,702& 49& 481 &(235)& 718& 5\\
		bakeryV2A& 1\,149 &(943)& 2\,090& 49& 647 &(285)& 934& 5\\
		bakeryV2B& 1\,150 &(943)& 2\,091& 50& 702 &(346)& 1\,051& 4\\
		bakeryV3B& 1\,495 &(1\,195)& 2\,675& 44& 518 &(227)& 748& 2\\
		bakeryV3A& 1\,149 &(943)& 2\,090& 49& 647 &(285)& 934& 5\\
		fischerB& 1\,532 &(1\,363)& 3\,850& 1& 175 &(90)& 300& 1\\
		fischerA& 634 &(448)& 1\,395& 1& 8 &(2)& 10& 1\\
		fischerV2B& 56 &(52)& 147& 3& 22 &(8)& 31& 8\\
		fischerV2A& 56 &(52)& 147& 3& 22 &(8)& 31& 8\\
		fischerV3A& 637 &(448)& 1\,400& 4& 10 &(3)& 14& 3\\
		fischerV3B& 638 &(448)& 1\,401& 5& 10 &(2)& 14& 3\\
		fischerV4A& 56 &(52)& 147& 3& 22 &(8)& 31& 8\\
		fischerV4B& 526 &(437)& 1\,506& 4& 268 &(136)& 480& 4\\
		fischerV5B& 643 &(454)& 1\,420& 10& 8 &(2)& 10& 1\\
		fischerV5A& 1\,532 &(1\,363)& 3\,850& 1& 175 &(90)& 300& 1\\
		philsB& 161 &(156)& 482& 1& 108 &(102)& 288& 1\\
		philsA& 23 &(18)& 49& 2& 22 &(12)& 38& 2\\
		philsV2B& 80 &(75)& 212& 2& 21 &(11)& 36& 2\\
		philsV2A& 161 &(156)& 482& 1& 108 &(102)& 288& 1\\
		philsV3A& 161 &(155)& 464& 1& 134 &(126)& 362& 1\\
		philsV3B& 80 &(75)& 212& 2& 21 &(11)& 36& 2\\
		philsV4A& 161 &(156)& 482& 1& 108 &(102)& 288& 1\\
		philsV4B& 161 &(155)& 464& 1& 134 &(126)& 362& 1\\
		petersonB& 20 &(14)& 34& 1& 16 &(9)& 26& 1\\
		petersonA& 20 &(14)& 33& 3& 14 &(6)& 21& 1\\
		\hline
	\end{tabular}
\end{table}
 %

\end{document}